\numberwithin{equation}{section}
\newtheorem{theorem}{Theorem}[section]
\newtheorem{lemma}[theorem]{Lemma}
\newtheorem{proposition}[theorem]{Proposition}
\newcommand{\mc}[1]{\mathcal{#1}}
\newcommand{\R}{\mathbb{R}}
\renewcommand{\C}{\mathbb{C}}
\newcommand{\Z}{\mathbb{Z}}
\newcommand{\ind}{\mathds{1}}
\renewcommand{\epsilon}{\varepsilon}
\newcommand{\Exterior}{\mathchoice{{\textstyle\bigwedge}}%
    {{\bigwedge}}%
    {{\textstyle\wedge}}%
    {{\scriptstyle\wedge}}}
\DeclareMathOperator{\pr}{\mathbb{P}}
\DeclareMathOperator{\ex}{\mathbb{E}}
\DeclareMathOperator{\var}{Var}
\DeclareMathOperator{\diag}{diag}
\DeclareMathOperator{\dist}{dist}
\DeclareMathOperator{\spec}{spec}
\DeclareMathOperator{\Ran}{Ran}
\DeclareMathOperator{\Ker}{Ker}
\DeclareMathOperator{\rank}{rank}
\newcommand{\norm}[1]{\left\|#1\right\|}
\begin{document}

\title{On the Sum of the Non-Negative Lyapunov Exponents for Some Cocycles Related to the Anderson Model}

\author{Ilia Binder, Michael Goldstein, Mircea Voda}

\date{}
\maketitle

\begin{abstract}
	We provide an explicit lower bound for the the sum of the non-negative Lyapunov exponents for some cocycles related to 
	the Anderson model. In particular, for the Anderson model on a strip
	of width $ W $ the lower bound is  proportional to $ W^{-\epsilon} $, for any $ \epsilon>0 $. This bound is 
	consistent with the fact that the lowest non-negative Lyapunov exponent is conjectured to have a lower bound 
	proportional to $ W^{-1} $.
\end{abstract}

\tableofcontents

\section{Introduction}

We consider the cocycles associated with a family of random operators on the strip
$ \Z_W:=\Z\times \{ 1,\ldots,W \} $ defined by 
\begin{equation}\label{eq:strip-Anderson}
	(H\Psi)_n = -\Psi_{n-1}-\Psi_{n+1}+S_n\Psi_n,
\end{equation}
where $ \Psi\in l^2(\Z,\C^W)\equiv l^2(\Z_W) $, and
\begin{equation*}
	S_n:=\diag(V_{(n,1)},\ldots,V_{(n,W)})-U_n.
\end{equation*}
The potentials $ V_i $ are i.i.d. random variables and the random matrices $ U_n $ are symmetric and independent of  $ V_i $. 
Let $ d $  be a positive integer such that for any $ n $ we have $ U_n(x,y)=0 $ for $ |x-y|> d $. Our methods work even when 
$ d=W $, but the results are close to optimal only when $ d $ is fixed as $ W\to \infty $. For this reason, the results will emphasize
only the latter case and the dependence on $ d $ won't be stated explicitly, but it will be clear in the proofs. 
In particular, if
\begin{equation*}
	U_n(x,y)=\begin{cases}
		1 &, \text{ if } |x-y|=1\\
		0 &, \text{ otherwise}
	\end{cases},
\end{equation*}
then we obtain the usual tight-binding Anderson model on the strip $ \Z_W $.
In this paper we assume that $ V_i $ have bounded density function $ \rho $ and we let
\begin{equation}\label{eq:intro-densityub}
  D_0 := \sup_{x} \rho(x) < + \infty.
\end{equation}
Furthermore we assume a weak integrability condition:
\begin{equation}\label{eq:intro-densitydecay}
  \pr(|V_i|\ge T) \le D_1/T,~\pr(\norm{U_n}\ge T) \le D_1/T,~T\ge 1.
\end{equation}

 The $ 2W\times 2W $ matrix 
\begin{equation*}
	T_N^E=\prod_{k=N}^{1}\begin{bmatrix}
		S_k-E & -I \\ I & 0
	\end{bmatrix}
\end{equation*}  
is called the $ N $-step transfer matrix and satisfies
\begin{equation}\label{eq:T_N-recurrence}
	\begin{bmatrix}
		\Psi_{N+1} \\ \Psi_{N}
	\end{bmatrix}
	=
	T_N^E \begin{bmatrix}
		\Psi_1 \\ \Psi_0
	\end{bmatrix},
\end{equation}
for any solution of the difference equation $ H\Psi=E\Psi $.
Let $ \gamma_1^E\ge \ldots \ge \gamma^E_W\ge 0 $ be the non-negative Lyapunov exponents associated with 
the cocycle determined by $ T_N^E $. See \cite[III.5,IV.3]{BL-85-Products} for 
definitions and further information. Note that since $ T_N^E $ are symplectic matrices it follows that 
the Lyapunov spectrum of the cocycle splits into a non-negative and a non-positive part, symmetrically with 
respect to zero. Furthermore, from this and the simplicity of the Lyapunov spectrum (see 
\cite{GM-87-condition}) it follows that in fact $ \gamma_1^E>\ldots >\gamma_W^E>0 $. The positivity of 
$ \gamma_W^E $ is crucial for Anderson localization. Indeed, it is known that the quasi-one-dimensional
Anderson model has (almost surely) only pure point spectrum and its eigenvectors decay exponentially with 
decay rate proportional to $ \gamma_W^E $ (see \cite{KLS-90-Localization}). 
It is thus natural to ask for an explicit lower bound on $ \gamma_W^E $. In particular one is interested in
the asymptotics of the lower bound when $ W\to \infty $. This is motivated by the idea of using the localization
on strips to understand what happens in $ \Z^2 $.   

The best known lower bound for $ \gamma_W^E $, due to Bourgain \cite{Bou-13-lower} (in the case of the Anderson
model), 
is $ \exp(-CW(\log W)^4) $; however, it is expected that it should be 
of the order of $ W^{-1} $. In \cite{BGV-13-fluctuations} we argued that it is possible to obtain lower bounds
of the order of $ W^{-C} $ provided that we have have a good enough lower bound on the variance of Green's
function. The estimate we obtained for the fluctuations of the Green's function was far from optimal and only 
yielded a somewhat weaker estimate for $ \gamma_W^E $ than the one from \cite{Bou-13-lower} . The idea of estimating the localization 
length through the fluctuations of the resolvent has been previously implemented by Schenker 
\cite{Sch-09-Eigenvector} in the context
of random band matrices, but it is not clear how to adjust his developments to the Anderson model. 

In this paper we consider a problem for which the approach of \cite{BGV-13-fluctuations} yields better results.
Namely, we provide a lower bound for the sum of the non-negative Lyapunov exponents through a lower bound on 
the fluctuations of the Dirichlet determinants.  We let $ H_N $ be the restriction of $ H $ to 
$ [1,N]\times[1,W] $ with Dirichlet boundary
conditions, and we denote the characteristic polynomial $ \det(H_N-E) $ by
$ f_N^E $. Our main result is as follows.

\begin{theorem}\label{thm:lower-bound-sum-Lyapunov}
	Let $ E\in \R $ and assume that there exists $ \beta_0>0 $  such that $ \var(\log|f_N^E|)\ge \beta_0 NW $ 
	for any $ N\ge 1 $. Then for any $ \epsilon>0 $ there exists
	a constant
	\begin{equation*}
		c_0=c_0(D_0,D_1,\beta_0,|E|,\epsilon)
	\end{equation*}
	such that
	\begin{equation*}
		\gamma^E_1+\ldots+\gamma^E_W\ge c_0 W^{-\epsilon}.
	\end{equation*}
\end{theorem}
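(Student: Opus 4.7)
The plan is to lower bound $\sum_{k=1}^W \gamma_k^E$ by combining the identification
\[
\sum_{k=1}^W \gamma_k^E \;=\; \lim_{N\to\infty} \frac{1}{N}\,\ex\log|f_N^E|,
\]
(the Thouless-type consequence of the multiplicative ergodic theorem for the exterior-power cocycle) with a quantitative argument showing that the assumed variance lower bound forces $\ex\log|f_N^E|$ to be at least $c_0 N W^{-\epsilon}$ at an appropriately chosen scale $N$.

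First I would identify $|f_N^E|$ with the absolute value of the determinant of the upper-left $W\times W$ block of $T_N^E$, which holds because Dirichlet boundary conditions correspond to the Lagrangian subspace $L_0 = \mathrm{span}(e_1,\ldots,e_W)$. This yields the almost-sure bound $\log|f_N^E| \le \log\|\Exterior^W T_N^E\|$ up to an $O(W)$ error. Then Furstenberg--Kesten applied to the exterior-power cocycle, subadditivity of $\ex\log\|\Exterior^W T_N^E\|$, the integrability hypothesis \eqref{eq:intro-densitydecay}, together with simplicity of the Lyapunov spectrum and genericity of $L_0$ in the symplectic Oseledec filtration, give the convergence $N^{-1}\ex\log|f_N^E| \to \sum_{k=1}^W \gamma_k^E$ with a controlled rate.

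Second, I would develop tail bounds for $X_N := \log|f_N^E|$. The upper tail is controlled by the deterministic bound $X_N \le W\sum_{k=1}^N \log^+\|T_k\|$ together with \eqref{eq:intro-densitydecay}, which yields sub-exponential tails with scale linear in $NW$. The lower tail comes from a Wegner-type estimate derived from the density bound \eqref{eq:intro-densityub}, giving roughly $\pr(X_N \le -t) \lesssim D_0 NW\,e^{-t/(NW)}$ for $t > 0$, since a deep negative value of $\log|f_N^E|$ requires some eigenvalue of $H_N$ to lie very close to $E$.

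The main and hardest step is to combine $\var X_N \ge \beta_0 NW$ with the two tail estimates and the a.s.\ upper bound $X_N \le \log\|\Exterior^W T_N^E\|$ to conclude $\ex X_N \gtrsim NW^{-\epsilon}$ at a suitable scale $N = N(W,\epsilon)$. Heuristically, the variance forces $X_N$ to spread by at least $\sqrt{\beta_0 NW}$ around its mean, while the upper tail concentrates it around $N\sum_k \gamma_k^E$; if $\ex X_N$ were too small, the distance from the mean up to the effective upper cut-off would be insufficient to accommodate the mandated variance. The principal obstacle is that the Wegner lower tail is by itself wide enough to explain the variance through deep negative fluctuations alone, without constraining $\ex X_N$ from below. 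Closing this gap requires a careful tuning of the scale $N$ against $\epsilon$ (for instance $N$ polynomial in $W$ with exponent depending on $1/\epsilon$), exploiting the quantitative concentration of $\log\|\Exterior^W T_N^E\|$ about its mean so that the upper cut-off is genuinely at scale $N\sum_k\gamma_k^E + o(N)$ rather than the crude $O(NW)$; this optimisation produces the $W^{-\epsilon}$ rate.

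Finally, the lower bound $\ex\log|f_N^E| \gtrsim NW^{-\epsilon}$ at the chosen $N$, combined with the MET-based identification of the first paragraph applied with a quantitative error, yields the claimed estimate $\sum_{k=1}^W \gamma_k^E \ge c_0 W^{-\epsilon}$ with $c_0 = c_0(D_0,D_1,\beta_0,|E|,\epsilon)$.
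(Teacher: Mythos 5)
Your overall skeleton matches the paper: identify $\sum_k\gamma_k^E$ with $\lim_N N^{-1}\ex\log|f_N^E|$ (with a quantified rate), then lower bound $\ex\log|f_N^E|$ using the variance hypothesis and tail control, tuning $N$ polynomially in $W$. But there is a genuine gap at the step you yourself flag as the main obstacle, and the fix you sketch does not close it.

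You correctly observe that the Wegner-based lower tail $\pr(\log|f_N^E|\le -t)\lesssim D_0NW\,e^{-t/(NW)}$ lives at scale $NW$, so deep negative excursions could by themselves account for a variance of order $\beta_0 NW$ without forcing $\ex\log|f_N^E|$ to be positive at all. Your proposed remedy --- sharpen the \emph{upper} cut-off from $O(NW)$ down to $N\sum_k\gamma_k^E+o(N)$ by concentration of $\log\|\Exterior^W T_N^E\|$ --- does not help, for two reasons. First, it is circular: $\sum_k\gamma_k^E$ is precisely the unknown quantity; if it happened to be, say, $e^{-W}$, the ``sharpened'' upper cut-off would be tiny and the Wegner-scale lower tail would still swamp the variance budget. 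Second, the a.s.\ bound $\log|f_N^E|\le\log\|\Exterior^W T_N^E\|+O(W)$ constrains only the positive side; it says nothing about how far $\log|f_N^E|$, a single $W\times W$ minor, can drop below $\log\|\Exterior^W T_N^E\|$, so tightening the upper tail leaves the lower tail untouched.

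The paper's key new ingredient, which your outline is missing, is \cref{thm:Cartan-W-negative-values}: exploiting $\det T_N^E=1$, one shows $\pr(\log|f_N^E|<-10KW)\le\exp(-K/4)$ for $K\ge C_0(1+\log(NW))$. This puts the negative tail at scale $W\log(NW)$ rather than $NW$. The mechanism is that $\|\Exterior^W T_N^E\|\ge 1$ (Sylvester--Franke), hence at least one $W\times W$ minor of $T_N^E$, equivalently one boundary determinant $f_N^E(u_\alpha,u_\beta)$ in a suitable basis of decomposable Lagrangian-type vectors, is at least $\exp(-CW)$; and all such determinants are comparable to $f_N^E$ up to $O(KW)$ with probability $1-\exp(-K/4)$, via the rank--$O(W)$ interlacing estimate of \cref{lem:interlacing}. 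With this improved lower tail in hand, the truncation argument of \cref{lem:lower-bound-determinant} works: the contributions of the negative part to the mean and to the second moment are only $O(W\log(NW))$ and $O(W^2\log^2(NW))$, both negligible against $\beta_0 NW$ once $N\gtrsim W^{1+5\epsilon}$, and one gets $\ex\log|f_N^E|\gtrsim(NW)^{1/2-\epsilon}$, which together with \cref{prop:rate-of_convergence} gives the claim. Without this improvement your argument cannot rule out the scenario you identified, so the proof as proposed does not go through.

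One smaller remark: your step-one rate of convergence should not rely on Oseledec genericity of the Dirichlet Lagrangian; the paper obtains the rate elementarily via the almost-additivity estimate \cref{lem:Cartan-estimates}\ (b) (a rank-$O(W)$ interlacing bound across a partition of $[1,N]$) applied at doubling scales, which is both simpler and gives the explicit $W\log(NW)/N$ error needed to match the $W^{-\epsilon}$ target.
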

Note that the estimate we obtain is consistent with the conjectured estimate $ \gamma_W^E\gtrsim W^{-1} $.
We believe 
that the estimate for the fluctuations of $ \log|f_N^E| $ should be true for general potentials, but we can only establish 
it for potentials with ``considerable tails''. Given a set 
$ \Lambda\subset \Z_W $ we let $ H_\Lambda $ be the restriction, with Dirichlet boundary conditions, of $ H $ 
to $ \Lambda $ and we let $ f_\Lambda^E=\det(H_\Lambda-E) $. 
\begin{theorem}\label{thm:var-lower_bound}
	Let $ E\in \R $ and $ \Lambda\subset\Z_W $. Then there exist constants $ C_0=C_0(D_0,D_1,|E|,d) $ and $ C_1 $ such that
	\begin{equation*} 
		\var(\log|f_\Lambda^E|)\gtrsim |\Lambda| |I| \inf_I \rho(x),
	\end{equation*}
	for any interval $ I $ of the form $ [M_0,C_1M_0] $ or $ [-M_0,-C_1M_0] $, with $ M_0\ge C_0 $.
\end{theorem}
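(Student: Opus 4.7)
The plan is to lower bound $\var(\log|f_\Lambda^E|)$ by a sum of per-site contributions using the Hoeffding/ANOVA decomposition for functions of independent variables: since the $V_i$ are independent,
\begin{equation*}
	\var(\log|f_\Lambda^E|) \ge \sum_{i\in\Lambda}\var\bigl(\ex[\log|f_\Lambda^E|\mid V_i]\bigr),
\end{equation*}
so it suffices to show $\var(\ex[\log|f_\Lambda^E|\mid V_i])\gtrsim |I|\inf_I\rho$ for each $i$, and then sum over the $|\Lambda|$ sites. The key analytic input is the affine dependence of $f_\Lambda^E$ on each individual $V_i$ once all other randomness is frozen.

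For fixed values of all variables except $V_i$, cofactor expansion along row and column $i$ gives $f_\Lambda^E = \alpha V_i + \beta$, with $\alpha = f_{\Lambda\setminus\{i\}}^E$ and $\beta = f_\Lambda^E|_{V_i = 0}$. Setting $c := -\beta/\alpha = -1/G^{(0)}(i,i)$, where $G^{(0)}$ is the resolvent of $H_\Lambda$ at $E$ with $V_i$ set to zero, one has $\log|f_\Lambda^E| = \log|\alpha| + \log|V_i-c|$, and averaging over the remaining variables yields
\begin{equation*}
	h_i(v) := \ex[\log|f_\Lambda^E|\mid V_i=v] = K_i + \int\log|v-z|\,d\nu_i(z),
\end{equation*}
where $K_i$ is independent of $v$ and $\nu_i$ is the distribution of $c$. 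Thus $h_i$ is, up to a constant, the logarithmic potential of $\nu_i$. To bound its variance I use $\var(h_i(V_i)) = \tfrac12\ex[(h_i(V_i)-h_i(V_i'))^2]$ with $V_i,V_i'$ independent copies, restricting to $V_i\in I$ and $V_i'\in[-A,A]$ for $A=A(D_1)$ large enough that $\pr(|V_i'|\le A)\ge 1/2$ (possible by \eqref{eq:intro-densitydecay}). Assuming $\nu_i$ concentrates most of its mass in $[-C_0/2,C_0/2]$ and has adequate log-moments, one gets $h_i(v) = \log|v|+O(1)$ for $|v|\ge M_0\ge C_0$, while $h_i(v')$ is $O(1)$ on average for $|v'|\le A$. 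The $\log M_0$-sized gap squared, weighted by $\pr(V_i\in I)\pr(|V_i'|\le A)\ge \tfrac12|I|\inf_I\rho$, gives $\var(h_i(V_i))\gtrsim (\log M_0)^2|I|\inf_I\rho$, which dominates $|I|\inf_I\rho$. Summing over $i\in\Lambda$ yields the theorem.

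The main obstacle is substantiating the tail and mass properties of $\nu_i$. Since $c = -1/G^{(0)}(i,i)$, both reduce to a Wegner-type anticoncentration estimate $\pr(|G^{(0)}(i,i)|\le 1/T)\lesssim D_0/T$. The natural route is to write $G^{(0)}(i,i)$ as a Möbius function of some single neighboring potential $V_j$ via the Sherman--Morrison formula, and then exploit the bounded density \eqref{eq:intro-densityub} to bound the measure of the $V_j$-interval where $|G^{(0)}(i,i)|\le 1/T$; the implicit constant will depend on typical sizes of Green's function entries, which one controls deterministically in terms of $\|H_\Lambda\|$ (bounded in turn via \eqref{eq:intro-densitydecay}). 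Converting this pointwise anticoncentration into the global log-moment bound for $\nu_i$—so that the tail of $\nu_i$ does not swamp the logarithmic potential at $v\in I$—is the delicate step; here the polynomial tail $D_1/T$ in \eqref{eq:intro-densitydecay} is just barely enough to ensure $\int\log|z|\,d\nu_i(z)<\infty$, which is what the argument requires.
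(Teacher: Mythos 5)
Your high-level scheme is the same as the paper's: the Bessel/ANOVA inequality \eqref{eq:var-Bessel} reduces the problem to lower-bounding $\var(\ex[\log|f_\Lambda^E|\mid V_k])$ per site, and the affine dependence $f_\Lambda^E=\alpha V_k+\beta$ (which the paper obtains via Schur's formula, identifying your $c=-\beta/\alpha$ with its $\xi_k$) turns the conditional expectation into a logarithmic potential $u_k(x)=\int\log|x-\zeta|\,d\mu_k(\zeta)$ plus a $V_k$-independent constant. Where you depart is in the two remaining tasks.

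For the variance lower bound, you use the two-point identity $\var=\tfrac12\ex(h(V)-h(V'))^2$ and compare $V\in I$ against a fixed reference window $V'\in[-A,A]$. The paper instead invokes the sharp result \cref{prop:estimate-for-log-potential} (from \cite{BGV-13-fluctuations}) giving $\var_I(u_k)\approx 1$ after first truncating $\mu_k$ to $[-R_0,R_0]$ and controlling the discarded tail via \cref{lem:log_potential-moments} and Cauchy--Schwarz. Your route is more elementary and would even give an extra $(\log M_0)^2$ factor, but you should be aware that the claim ``$h_i(v)=\log|v|+O(1)$'' is false as stated (it omits $K_i=\ex\log|\alpha|$, which can be of size $|\Lambda|$); this does not hurt you because $K_i$ cancels in $h_i(v)-h_i(v')$, but the write-up should work with the difference directly and also show that the $L^2(\rho\,dv')$-norm of $\int\log|v'-z|\,d\nu_i(z)$ on $[-A,A]$ is bounded -- exactly the role played by \cref{lem:log_potential-moments}.

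The real gap is in the tail control of $\nu_i$. You propose the anticoncentration bound $\pr(|G^{(0)}(i,i)|\le 1/T)\lesssim D_0/T$ via Sherman--Morrison in a neighboring $V_j$. This does not obviously go through: $G^{(0)}(i,i)$ is a M\"obius function $\frac{aV_j+b}{cV_j+d}$ whose numerator coefficient $a$ is a $2\times2$ minor of the Green's function of a smaller system, and when $|a|$ is small the sublevel set $\{V_j:|aV_j+b|<|cV_j+d|/T\}$ is not short; bounding $\|H_\Lambda\|$ controls neither the smallness of $a$ nor the relation between $a$ and $c$, so the implicit constant cannot be made uniform by that route. The paper sidesteps this entirely by noting $c=\xi_k=U_{k_1}(k_2,k_2)+E+\Gamma(H_{\Lambda\setminus\{k\}}-E)^{-1}\Gamma^t$ (this is the exact Schur complement formula for $-1/G^{(0)}(k,k)$) and then applying the known Wegner estimate \eqref{eq:Wegner-for-entries} to the finitely many entries of $(H_{\Lambda\setminus\{k\}}-E)^{-1}$ that appear, together with the $\norm{U_n}$ tail from \eqref{eq:intro-densitydecay}. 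This yields the weaker but perfectly sufficient bound $\pr(|\xi_k|>R)\lesssim d^2(d^2/R)^{1/3}$ -- note you do not get, and do not need, the $1/T$ rate. You should replace your Sherman--Morrison step by this argument (or at minimum observe that the $\xi_k$-formula gives the tail directly), and also note that any polynomial tail for $\nu_i$ makes $\int\log^+|z|\,d\nu_i$ finite, so the ``just barely enough'' remark is unduly pessimistic.
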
 
Note that in this paper the constants implied by symbols like $ \gtrsim $ or $ \gg $ will always be
absolute constants.  The above theorem says that if $ \rho $ has large enough tails, that is if $ \rho>0 $
on a large enough interval $ I $, then 
\begin{equation*}
	\var(\log|f_\Lambda^E|)\ge c(\rho,I,|E|,d)|\Lambda|	
\end{equation*}
and, in particular, the assumption needed for 
\cref{thm:lower-bound-sum-Lyapunov} is satisfied. Of course, for the lower bound in \cref{thm:lower-bound-sum-Lyapunov} to be 
proportional to $ W^{-\epsilon} $ it is important that $ d $ should be independent of $ W $. In the case $ d=d(W) $ it should be 
clear from the proofs that we obtain a lower bound proportional to $ W^{-C} $, with some absolute
constant $ C $.

The estimate we obtained on the fluctuations of $ \log|f_\Lambda^E| $ is probably optimal in $ \Z_W $. This should be clear from the following large deviations estimate for the Dirichlet determinants. 
\begin{theorem}\label{thm:ldt}
	Let $ \epsilon>0 $ and $ E\in \R $. Then there exists a constant $ C_0=C_0(d,D_0,D_1,|E|,\epsilon) $ such 
	that for any $ K\gg 1 $ and
 	any rectangular set $ \Lambda\subset \Z_W $ with $ |\Lambda|\ge C_0 $ we have
	\begin{equation*}
		\pr \left( \left|\log|f_\Lambda^E|-\ex \left( \log|f_\Lambda^E| \right)\right|
			> |\Lambda|^{1/2+\epsilon}K \right)\le \exp(-K/2).
	\end{equation*}
\end{theorem}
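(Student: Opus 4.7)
The plan is to decompose $\log|f_\Lambda^E|-\ex\log|f_\Lambda^E|$ as a martingale sum with respect to the natural filtration on the independent random inputs to $H_\Lambda$, control each increment in sub-exponential norm, and then apply a Bernstein-type inequality for martingales. Enumerate the underlying random variables as $X_1,\ldots,X_N$ with $N\lesssim d|\Lambda|$, comprising the potentials $V_{(n,i)}$ for $(n,i)\in\Lambda$ and (suitably grouped) the random matrices $U_n$. Setting $\mc F_k:=\sigma(X_1,\ldots,X_k)$ and $D_k:=\ex(\log|f_\Lambda^E|\mid\mc F_k)-\ex(\log|f_\Lambda^E|\mid\mc F_{k-1})$, one has
\begin{equation*}
\log|f_\Lambda^E|-\ex\log|f_\Lambda^E|=\sum_{k=1}^N D_k.
\end{equation*}

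The key structural fact is that $f_\Lambda^E$ is affine in each $V_i$ (which enters a single diagonal entry of $H_\Lambda$) and depends on each $U_n$ through a perturbation of rank $O(dW)$. Consequently, conditional on the other $X_j$, $\log|f_\Lambda^E|$ is a sum of $O(dW)$ terms of the form $\log|X_k-\lambda|$; in the $U_n$-case this should be read as the log-modulus of an $O(dW)\times O(dW)$ determinant obtained from the Weinstein--Aronszajn identity. The bounded density \cref{eq:intro-densityub} provides Wegner-type lower-tail control for these factors in the $V$-steps directly, and in the $U$-steps indirectly via a single-site Wegner estimate applied to $H_\Lambda$ restricted to the complement of the $n$-th slice, using only the $V_i$'s outside that slice. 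The tail assumption \cref{eq:intro-densitydecay} forces $|V_i|$, $\norm{U_n}$, and hence $|\lambda|$, to be polynomially bounded with an $O(\log|\Lambda|)$ loss, yielding sub-exponential upper tails for $\log|X_k-\lambda|$. Combined, these give, for every $k$, an increment bound
\begin{equation*}
\ex(e^{\theta D_k}\mid\mc F_{k-1})\le\exp(C\theta^2),\qquad|\theta|\le 1/(C\log|\Lambda|),
\end{equation*}
with $C=C(d,D_0,D_1,|E|)$.

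A standard martingale Bernstein inequality then yields
\begin{equation*}
\pr\left(\left|\sum_{k=1}^N D_k\right|>t\right)\le 2\exp\left(-c\min\bigl(t^2/N,\,t/\log|\Lambda|\bigr)\right).
\end{equation*}
Setting $t=|\Lambda|^{1/2+\epsilon}K$ with $N\lesssim|\Lambda|$, for $|\Lambda|\ge C_0(d,D_0,D_1,|E|,\epsilon)$ the $|\Lambda|^\epsilon$ slack absorbs the $\log|\Lambda|$ factor, and both quantities inside the minimum exceed $K/2$ once $K\gg 1$, producing the desired $\exp(-K/2)$.

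The main obstacle is the sub-exponential moment bound for the $U_n$-steps, since the entries of $U_n$ are not assumed to have a bounded density and Wegner-type control of the lower tail must be imported from the $V$-variables. The plan is to treat each $U_n$ as a single block in the filtration, express the ratio $f_\Lambda^E(U_n)/f_\Lambda^E(U_n')$ via the Weinstein--Aronszajn identity as a determinant whose entries are resolvents of the slice-complement Hamiltonian, and apply a single-site Wegner estimate in that complement to obtain the lower-tail control. Organising this so that the sub-exponential norms of these block increments are independent of $|\Lambda|$ apart from the harmless $\log|\Lambda|$ factor is the delicate point.
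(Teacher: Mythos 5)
Your approach is genuinely different from the paper's. The paper never forms a martingale over the individual random inputs. Instead it partitions $\Lambda$ spatially into $l\times l$ cells $\Lambda_i$ with $l\approx|\Lambda|^{1/2-c_0\delta_0}$, uses \cref{lem:interlacing} (a Weyl eigenvalue-interlacing lemma) to show $\log|f_\Lambda^E|-\sum_i\log|f_{\Lambda_i}^E|$ is controlled by the boundary $|\cup_i\partial_\Lambda\Lambda_i|$ times a $\log$ factor (\cref{lem:Cartan-estimates}(b)), and then applies the scalar Bernstein inequality (\cref{thm:Bernstein}) to the independent sum $\sum_i\log|f_{\Lambda_i}^E|$. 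Because the initial moment bound for a block, from \cref{lem:Cartan-estimates}(a), is only $\norm{\log|f_{\Lambda_i}^E|-\ex}_m\le mC|\Lambda_i|\log|\Lambda_i|$ (exponent $1$, not $1/2$), a single pass is not enough: \cref{prop:ldt-from-moments} is iterated, with the exponent $\delta_n=1/(2n+2)\to 0$, to approach $|\Lambda|^{1/2+\epsilon}$. If your martingale increment bounds held, you would avoid the bootstrap entirely and obtain the sharper deviation $\sqrt{|\Lambda|}\log|\Lambda|\cdot K$, so in that sense your route would be both shorter and stronger.

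The difficulty you flag with the $U_n$-steps is not merely delicate; as stated the increment bound fails, and the failure is quantitative. If you keep each $U_n$ as a single block in the filtration, then the best available pointwise bound on that increment comes from \cref{lem:interlacing}: $H_\Lambda(U_n)-H_\Lambda(U_n')$ is supported on the $n$-th column and has rank up to the width $W'$ of $\Lambda$, so the increment is of size $O(W'\log|\Lambda|)$, not $O(\log|\Lambda|)$. This is sharp, e.g. for $U_n$ of the form (scalar random variable)$\times$(fixed full-rank $d$-banded matrix), where $\log|f_\Lambda^E(U_n)|-\log|f_\Lambda^E(U_n')|$ genuinely scales like $W'$. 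With $\approx|\Lambda|/W'$ such increments, a martingale Bernstein bound then gives a deviation scale $\sqrt{|\Lambda|W'}\log|\Lambda|$, which exceeds $|\Lambda|^{1/2+\epsilon}K$ for fixed $K$ whenever $W'\gtrsim|\Lambda|^{2\epsilon}$ — in particular for squarish rectangles with small $\epsilon$, which is exactly the regime the paper cares about (``really a result on $\Z^2$''). The alternative you gesture at, splitting $U_n$ into its $O(dW')$ entries so that each step is a rank-$O(1)$ perturbation of size $O(\log)$, would repair the count, but it silently assumes the entries of $U_n$ are independent, which the paper does not assume (only that $U_n$ is symmetric, $d$-banded, and independent of the $V_i$). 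By contrast, the paper's block decomposition never isolates the $U_n$-contribution: the fluctuation of a whole block $\Lambda_i$, including all $U_n$ restricted to it, is controlled in one stroke by \cref{lem:Cartan-estimates}(a), which rests on the Wegner estimate using the $V$'s in the block, so no per-$U_n$ increment bound is ever required. That is the structural reason the paper's route goes through and yours, as written, does not.

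A smaller point: even for the $V$-steps, the claimed $\ex(e^{\theta D_k}\mid\mc F_{k-1})\le\exp(C\theta^2)$ with $C$ \emph{uniform} over $\mc F_{k-1}$ needs a truncation. Conditionally on the past, $D_k$ is the average over the future of $\log|V_k-\xi_k|-\ex_{V_k}\log|V_k-\xi_k|$, and the lower tail of this quantity is only controlled at scale $\log|\xi_k|$, which is unbounded; it is tamed only after using \cref{eq:Wegner-for-entries} to restrict to the high-probability event $|\xi_k|\le|\Lambda|^{O(1)}$. The $1/(C\log|\Lambda|)$ window you allow for $\theta$ is the right remedy, but you should state the truncation explicitly, as this is exactly where the $\log|\Lambda|$ loss enters.
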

The large deviations estimate implies that $ \var(\log|f_\Lambda^E|)\lesssim |\Lambda|^{1+2\epsilon} $, for 
any $ \epsilon>0 $ and $ |\Lambda|\ge C_0 $. It follows that both the fluctuations lower bound and the large
deviations estimate are close to being optimal on $ \Z_W $. Note that the large deviations estimate is in fact independent of $ W $,
so it is really a result on $ \Z^2 $.

We will now discuss the proof of \cref{thm:lower-bound-sum-Lyapunov}. From the proof of the Thouless formula on the strip (see \cite{CS-83-Log}; cf. \cite{KS-88-Stochastic-Schr} and \cite[Prop. VI.4.4]{CL-90-Spectral}) 
we have that
\begin{equation}\label{eq:determinant-to-Lyapunov}
	\gamma_1^E+\ldots+\gamma_W^E=\lim_{N\to\infty}\frac{\ex (\log|f_N^E|)}{N}.
\end{equation}
In fact, as a consequence of the large deviations estimate from \cref{thm:ldt} it follows that we have the stronger pointwise result: 
\begin{equation*}
	\gamma_1^E+\ldots+\gamma_W^E\stackrel{\text{a.s.}}{=}\lim_{N\to\infty}\frac{\log|f_N^E|}{N},
\end{equation*}
but \cref{eq:determinant-to-Lyapunov} is enough for our purposes. We are able to estimate the rate of 
convergence in \cref{eq:determinant-to-Lyapunov} (see \cref{prop:rate-of_convergence}) and reduce the problem
of finding a lower bound for the sum of Lyapunov exponents to finding a lower bound for $ \ex(\log|f_N^E|) $.
The idea behind estimating $ \ex(\log|f_N^E|) $ is very simple: if we have a random variable $ X\in[0,M] $ then
\begin{equation*}
	\ex X\ge M^{-1} \ex{X^2}\ge M^{-1} \var X.
\end{equation*}
Of course, $ \log|f_N^E| $ doesn't satisfy the needed assumptions, but we can argue that the values of 
$ \log|f_N^E| $ outside of $ [0,(NW)^{1/2+\epsilon}] $ do not have a significant contribution towards the 
expected value. For the values greater than $ (NW)^{1/2+\epsilon} $ this follows from the large deviations
estimate. For the negative values we use the following Cartan type estimate.
\begin{theorem}\label{thm:Cartan-W-negative-values}
	Let $ E\in \R $ and $ N,W\ge 1 $. There exists a constant $ C_0=C_0(D_0,D_1,|E|) $ such that 
	\begin{equation*}
		\pr(\log|f_N^E|< -10KW)\le\exp(-K/4).
	\end{equation*}
	for any $ K\ge C_0(1+\log(NW)) $.
\end{theorem}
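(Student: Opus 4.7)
The plan is a Chernoff-type argument that uses the fact that $f_N^E = \det(H_N-E)$ is affine in each random variable $V_{(n,i)}$. Fix any enumeration $x_1,\ldots,x_{NW}$ of the sites of $[1,N]\times[1,W]$ and, for $0\le j\le NW$, set $\Lambda_j := [1,N]\times[1,W]\setminus\{x_1,\ldots,x_j\}$. Repeated cofactor expansion yields the telescoping factorization
\[
f_N^E \;=\; \prod_{j=1}^{NW}(V_{x_j} - c_j), \qquad c_j = c_j\bigl(U_1,\ldots,U_N,(V_{x_k})_{k>j}\bigr),
\]
on the full-measure set where none of the intermediate restricted determinants $\det(H_{\Lambda_j}-E)$ vanish. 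Setting $\mathcal{G}_j := \sigma((V_{x_k})_{k\ge j})\vee \sigma(U_1,\ldots,U_N)$, the coefficient $c_j$ is $\mathcal{G}_{j+1}$-measurable while $V_{x_j}$ is independent of $\mathcal{G}_{j+1}$ with density bounded by $D_0$; hence for any $\lambda\in(0,1)$ a short calculation gives
\[
\ex\bigl(|V_{x_j}-c_j|^{-\lambda}\,\big|\,\mathcal{G}_{j+1}\bigr) \;\le\; 1 + \tfrac{2D_0}{1-\lambda}.
\]

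Iterating this via the tower property yields $\ex(|f_N^E|^{-\lambda}) \le (1+2D_0/(1-\lambda))^{NW}$, and Markov's inequality with $\lambda=1/2$ gives
\[
\pr\bigl(\log|f_N^E| < -T\bigr) \;\le\; e^{-T/2}(1+4D_0)^{NW}.
\]
For $T = 10 KW$ the right-hand side is $\le e^{-K/4}$ as soon as $K \ge C(D_0)\, N$, which handles the large-$K$ regime.

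For the complementary small-$K$ regime I would use a sharper multi-affine Cartan-type estimate, proved by induction on $n$: every polynomial $P$ of degree at most $1$ in each of $n$ variables, with the coefficient of $x_1\cdots x_n$ equal to $a$, satisfies
\[
\pr\bigl(|P(V_1,\ldots,V_n)| \le s\bigr) \;\le\; (CD_0)^n \,\frac{s}{|a|}\,\frac{[\log(|a|/(D_0 s))]^{n-1}}{(n-1)!}.
\]
The base case $n=1$ is the elementary bound $\pr(|aV+b|\le s) \le 2 D_0 s/|a|$; the inductive step writes $P=x_1 R(x_2,\ldots,x_n)+S(x_2,\ldots,x_n)$, observes that $R$ has the same leading coefficient $a$, applies the base case conditionally on $x_2,\ldots,x_n$, and integrates the resulting $\min(1,2D_0 s/|R|)$ against the inductive tail of $|R|$. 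Applied with $n=NW$, $a=1$, $s=e^{-10KW}$, Stirling's formula reduces the logarithm of the right-hand side to roughly $(NW-1)\log(10CD_0 eK/N) - 10KW$, which is $\le -K/4$ whenever $10 CD_0 eK/N \le 1/2$, i.e., $K \lesssim N/D_0$. Together with the Chernoff bound of the previous paragraph this covers the full range $K\ge C_0(1+\log(NW))$ after choosing $C_0$ large enough in terms of $D_0$.

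The integrability hypothesis \eqref{eq:intro-densitydecay} on the upper tails of $V_i$ and $\|U_n\|$ enters only through an a priori truncation: on the event $\{\max_{(n,i)}|V_{(n,i)}|, \max_n \|U_n\| \le (NW)^C\}$, which has probability at least $1-O((NW)^{-c})$, the auxiliary $c_j$ are polynomially bounded and the constants above can be chosen independently of $N$, $W$, and $d$, giving the stated dependence $C_0 = C_0(D_0,D_1,|E|)$. The main obstacle is the intermediate range $N/D_0 \lesssim K \lesssim N\log(1+D_0)$, where neither the Cartan nor the Chernoff bound is individually sharp; closing this gap requires optimizing $\lambda$ in the Chernoff bound as a function of $K$ and patching the two estimates together, which is routine bookkeeping but occupies the technical core of the argument.
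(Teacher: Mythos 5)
Your proposal replaces the paper's argument (which is built on the symplectic identity $\det T_N^E=1$, exterior powers, and the interlacing lemma) with a generic anti-concentration argument for multi-affine polynomials, using only that $f_N^E$ is degree $\le 1$ in each potential $V_{(n,i)}$ with leading coefficient $1$ and that the $V$'s have density $\le D_0$. This is a genuinely different route, but it has a gap which I do not believe can be closed by ``routine bookkeeping,'' and the obstruction is structural rather than technical.

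The Chernoff bound covers $K\gtrsim N\log(1+D_0)$ and the multi-affine Cartan estimate covers $K\lesssim N/D_0$. In between, neither estimate is useful, and the issue is that in this window the desired conclusion is \emph{false} for generic multi-affine polynomials with the same anti-concentration data. Take $P=\prod_{j=1}^{NW}V_j$ with the $V_j$ i.i.d.\ uniform on $[-1/(2D_0),1/(2D_0)]$; then $P$ is multi-affine with leading coefficient $1$ and each variable has density exactly $D_0$. Writing $-\log|V_j|-\log(2D_0)\sim\mathrm{Exp}(1)$, one has
\begin{equation*}
\pr\bigl(\log|P|<-10KW\bigr)=\pr\Bigl(\Gamma(NW,1)>10KW-NW\log(2D_0)\Bigr),
\end{equation*}
which is bounded away from $0$ (indeed $\approx 1/2$) at $K\approx N\bigl(1+\log(2D_0)\bigr)/10$. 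For $N$ large, this value of $K$ satisfies $K\ge C_0(1+\log(NW))$, and $e^{-K/4}$ is exponentially small there. Hence no argument that inputs only bounded density and the leading coefficient of a multi-affine polynomial can establish the stated inequality uniformly in $K\ge C_0(1+\log(NW))$: it would have to prove a false statement about $\prod V_j$. Consistently with this, both of your bounds are vacuous at this $K$: the optimized Chernoff exponent at $K^*=N(1+\log 2D_0)/10$ reduces to $NW[-\lambda-\log(1-\lambda)]>0$, and the Cartan prefactor $(CD_0)^{NW}$ beats the Stirling gain when $T/(NW)\approx 1+\log(2D_0)$.

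The paper closes this gap by a completely different mechanism: since $T_N^E$ is symplectic, $|\det\Exterior^W T_N^E|=1$, so at least one $W\times W$ minor of $T_N^E$ is bounded below by $\exp(-CW)$ \emph{deterministically} (no probability at all). Lemma \ref{lem:basis} rewrites these minors as modified Dirichlet determinants $f_N^E(u_\alpha,u_\beta)$, and Lemma \ref{lem:Dirichlet-vs-minors} (via the interlacing Lemma \ref{lem:interlacing}, a rank-$O(W)$ perturbation argument) shows that with probability $\ge 1-e^{-K/4}$ these modified determinants exceed $f_N^E$ by at most $8KW$ in log. This deterministic lower bound on the largest minor, valid for every realization of the disorder, is exactly the input your argument is missing, and it is what carries the $W$-scale (rather than $NW$-scale) behavior. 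Two smaller points: the conditional-moment Chernoff bound $\ex(|V_{x_j}-c_j|^{-\lambda}\mid\mc G_{j+1})\le(2D_0)^{\lambda}/(1-\lambda)$ is correct (your $1+2D_0/(1-\lambda)$ is a slightly wasteful version), and the a priori truncation in your final paragraph is not needed for the anti-concentration bounds, which never use upper tails of $V$ or $U$; the paper needs the tails because it controls $\norm{\tilde H_1}$ and a Wegner bound, not because the $c_j$'s must be polynomially bounded.
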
 
The above estimate is a crucial improvement over the straightforward estimate that can be obtained from the 
Wegner estimate: 
\begin{equation*}
	\pr(\log|f_N^E|<-KNW)\le\exp(-K/4).
\end{equation*}
for any $ K\ge C(1+\log(NW)) $ (see \cref{lem:Cartan-estimates} (a)).

Finally, we discuss the organization of the paper. \cref{sec:var,sec:ldt,sec:tm} deal, in order, with the 
proofs of \cref{thm:var-lower_bound,thm:ldt,thm:Cartan-W-negative-values}. These sections are independent of
each other, with the exception of a few auxiliary lemmas that get used throughout the paper.
The main result, \cref{thm:lower-bound-sum-Lyapunov}, is deduced in \cref{sec:lower-bound-expectation}. 
 
\section{Lower Bound for the Variance}
\label{sec:var}
 
\cref{thm:var-lower_bound} follows, with small modifications, from the proof of the lower bound for the 
fluctuations of Green's functions 
\cite[Thm. 1.1]{BGV-13-fluctuations}. For the convenience of the reader and in the interest of clarity we give
a complete proof in this section.

A key ingredient for the proof is the 
following estimate for the variance of a logarithmic potential. We use $ m_I $ to denote the uniform 
probability measure on a set $ I\subset \R $ and $ 
\var_I $ to 
denote the variance with respect to $ m_I $. We also use $ \norm{\cdot}_{I} $ to denote 
the norm in $ L^2(I,m_I) $.

\begin{proposition}\label{prop:estimate-for-log-potential}
	(\cite[Proposition 2.2 (iii)]{BGV-13-fluctuations})
  Let
  $\mu$
  be a Borel probability measure on
  $\R$
  and let
  \begin{equation*}
  u(x)=\int_{\R}\log |x-\zeta|d\mu(\zeta).
  \end{equation*}
  If
  $\mu ( |\zeta|\ge R)=0$
  for some
  $R>0$,
  then for any
  $M_1\ge 2M_0\ge 4R$
  one has
  \begin{equation*}
  \left|\var_{[M_0,M_1]}(u) -1\right|
  \lesssim (RM_1^{-1})^{1/5}+(M_0M_1^{-1})^{1/2}.
  \end{equation*}
\end{proposition}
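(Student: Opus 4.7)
The plan is to write $u = \log|x| + v$ on the positive part of $[M_0,M_1]$ and compute the variance of each piece, treating $v$ as a small perturbation. Since $\mu$ is supported in $[-R,R]$ and $x \geq M_0 \geq 2R$, the remainder
\begin{equation*}
v(x) := \int_{\R} \log|1 - \zeta/x|\, d\mu(\zeta)
\end{equation*}
satisfies $|v(x)| \leq 2R/x$ via the elementary bound $|\log(1-y)| \leq 2|y|$ for $|y| \leq 1/2$. Integrating against the uniform measure on $[M_0,M_1]$ then gives $\var_{[M_0,M_1]}(v) \leq \ex_{[M_0,M_1]}(v^2) \lesssim R^2/(M_0 M_1)$.

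Next I would compute $\var_{[M_0,M_1]}(\log x)$ explicitly. Rescaling $x = M_1 t$ reduces this to $\var_{[\alpha,1]}(\log t)$ with $\alpha := M_0/M_1 \in (0,1/2]$. Both the first and second moments admit closed forms via integration by parts, and a Taylor expansion around $\alpha = 0$ yields
\begin{equation*}
\var_{[\alpha,1]}(\log t) = 1 - \alpha \log^2\alpha + O(\alpha^2 \log^2\alpha),
\end{equation*}
which is dominated by $1 + O(\alpha^{1/2})$ in the range $\alpha \leq 1/2$.

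Finally, expanding
\begin{equation*}
\var(u) = \var(\log|x|) + 2\cov(\log|x|, v) + \var(v)
\end{equation*}
and controlling the cross term by Cauchy--Schwarz produces
\begin{equation*}
|\var_{[M_0,M_1]}(u) - 1| \lesssim (M_0/M_1)^{1/2} + R/\sqrt{M_0 M_1} + R^2/(M_0 M_1).
\end{equation*}
Using the hypothesis $M_0 \geq 2R$, each of the last two terms is bounded by a multiple of $(R/M_1)^{1/2}$, hence by $(R/M_1)^{1/5}$, and the claim follows.

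The main obstacle is mild: it is just bookkeeping for the Taylor expansion of $\var_{[\alpha,1]}(\log t)$, done carefully enough to isolate the leading constant $1$ and to package every remainder term into the two-term bound stated. The exponent $1/5$ is not forced by this argument; the method in fact yields the stronger $(R/M_1)^{1/2}$, so presumably the weaker form quoted here is the version convenient for the applications to $\var(\log|f_\Lambda^E|)$ in \cref{sec:var}.
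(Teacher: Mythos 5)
Your argument is correct and self-contained. Note first that this paper does not actually prove the proposition: it is imported verbatim from \cite[Proposition 2.2 (iii)]{BGV-13-fluctuations}, so there is no in-paper proof against which to compare.

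That said, your decomposition $u(x)=\log x + v(x)$ with $v(x)=\int\log|1-\zeta/x|\,d\mu(\zeta)$ is the natural one, and every step checks. Since $x\ge M_0\ge 2R$ while $\mu$-a.e.\ $|\zeta|<R$, we have $|\zeta/x|\le 1/2$, so $|v(x)|\le 2R/x$ and hence $\var(v)\le \ex_{[M_0,M_1]}(4R^2/x^2)=4R^2/(M_0M_1)$. The explicit computation of the logarithmic variance simplifies neatly: with $\alpha=M_0/M_1$,
\begin{equation*}
\var_{[\alpha,1]}(\log t)=1-\frac{\alpha\log^2\alpha}{(1-\alpha)^2},
\end{equation*}
an exact identity rather than a Taylor expansion, which immediately gives $|\var_{[M_0,M_1]}(\log x)-1|\lesssim\alpha\log^2\alpha\lesssim\alpha^{1/2}$ on $\alpha\le 1/2$. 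Combining with Cauchy--Schwarz for the cross term and using $M_0\ge 2R$ and $R/M_1\le 1/4$ to absorb $R/\sqrt{M_0M_1}$ and $R^2/(M_0M_1)$ into $(R/M_1)^{1/2}$ gives
\begin{equation*}
|\var_{[M_0,M_1]}(u)-1|\lesssim (M_0/M_1)^{1/2}+(R/M_1)^{1/2},
\end{equation*}
which strictly improves the exponent $1/5$ in the stated bound. Your observation that the weaker exponent is not forced by the argument is right; the $1/5$ is simply what the cited reference records, and since $R/M_1\le 1/4<1$ your bound implies it. In short: correct, and in fact sharper than what is claimed.
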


We will also need the estimate on the integrability of the logarithmic potentials from
\cref{lem:log_potential-moments}.  Its proof uses the following standard lemma. We state it as a separate 
result because we also need it in the other sections. We use the notation $ \norm{X}_m $
for $ (\ex X^m)^{1/m} $.
\begin{lemma}\label{lem:abstract-moment-estimate}
	If $ X\ge 0 $ is a random variable such that
	\begin{equation*}
		\pr(X>C_0 K)\le \exp(-c_0K),
	\end{equation*}
	for every $ K\ge K_0 $, with $ c_0\le1 $, $ C_0,K_0\ge 1 $, then
	\begin{equation*}
		\norm{X}_m\lesssim m C_0 K_0/c_0,\,m\ge 1.
	\end{equation*}
\end{lemma}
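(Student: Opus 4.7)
The plan is to use the standard layer-cake representation
\[
  \ex X^m = m \int_0^\infty t^{m-1} \pr(X > t)\, dt
\]
and split the integral at the threshold $t = C_0 K_0$, beyond which the hypothesized tail bound applies.

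First I would estimate the low piece $\int_0^{C_0 K_0} m t^{m-1} \pr(X > t)\, dt$ trivially by $(C_0 K_0)^m$, using $\pr(X > t) \le 1$. For the high piece $\int_{C_0 K_0}^\infty m t^{m-1} \pr(X > t)\, dt$, I would substitute $t = C_0 K$ with $K \ge K_0$, apply the hypothesis $\pr(X > C_0 K) \le \exp(-c_0 K)$, and then extend the range of integration to $[0,\infty)$ to produce a Gamma integral:
\[
  m C_0^m \int_{K_0}^\infty K^{m-1} e^{-c_0 K}\, dK \le m C_0^m \cdot \frac{\Gamma(m)}{c_0^m} = m!\left(\frac{C_0}{c_0}\right)^m.
\]

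Finally I would combine the two contributions. Using $m! \le m^m$, together with the standing assumptions $K_0 \ge 1$ and $c_0 \le 1$, both pieces are bounded by $(m C_0 K_0 / c_0)^m$, so $\ex X^m \le 2 (m C_0 K_0/c_0)^m$, and taking $m$-th roots yields the stated bound with an absolute implicit constant.

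This is a routine calculation and there is no genuine obstacle; the only care needed is in keeping the dependence on all four parameters explicit, and in absorbing the trivial low-range contribution $(C_0 K_0)^m$ into the same form as the tail contribution, which works precisely because $K_0 \ge 1$ and $c_0 \le 1$ make the factors $K_0/c_0$ and $m/c_0$ each at least as large as what the trivial bound produces.
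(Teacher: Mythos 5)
Your proposal is correct and follows essentially the same route as the paper: both use the layer-cake representation $\ex X^m = m\int_0^\infty t^{m-1}\pr(X>t)\,dt$, split at $t = C_0K_0$, bound the low piece trivially by $(C_0K_0)^m$, and control the high piece with a Gamma integral after substituting $t = C_0K$. The only cosmetic difference is that the paper finishes via Stirling's formula to reach $\norm{X}_m \lesssim C_0K_0 + mC_0/c_0$ (then weakens to the stated product form), whereas you absorb both pieces directly into $(mC_0K_0/c_0)^m$ using $m!\le m^m$, $K_0\ge 1$, $c_0\le 1$; either way the conclusion is the same.
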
 

\begin{proof}
	\begin{multline*}
		\ex X^m=\int_{0}^{\infty}\pr(X> \lambda) m\lambda^{m-1}\,d\lambda
		=\int_{0}^{\infty}\pr(X>C_0 K) C_0^m m K^{m-1}\,dK\\
		\le (C_0K_0)^m+C_0^m\int_{K_0}^{\infty}\exp(-c_0K)m K^{m-1}\,dK
		\le (C_0K_0)^m+C_0^m \Gamma(m+1)/c_0^m. 
	\end{multline*}
	Applying Stirling's formula for the gamma function we conclude that
	\begin{equation*}
		\norm{X}_m\lesssim C_0K_0+mC_0/c_0.
	\end{equation*}
	Note that we stated the estimate in a weaker form because it is somewhat easier to apply and the weakening 
	doesn't affect the other estimates in this paper.
\end{proof}

\begin{lemma}\label{lem:log_potential-moments}
	Let $ \mu $ be a Borel measure on $ \R $ such that $ \mu(\R)\le 1 $ and
	\begin{equation*}
		\mu(|\zeta|>R)\le C_0/R,~R\ge C_1.	
	\end{equation*}
	Then for any non-degenerate interval $ I\subset\R $ we have
	\begin{equation*}
		\norm{\int_{\R}\log|x-\zeta|\,d\mu(\zeta)}_{I}
		\lesssim \max(1,\log M,-\log|I|,\log C_0,\log C_1),
	\end{equation*}
	where $ M=\sup_{I} |x| $.
\end{lemma}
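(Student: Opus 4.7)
Write $u(x):=\int_\R\log|x-\zeta|\,d\mu(\zeta)$ and split the logarithm into its positive and negative parts, $\log_+ r:=\max(\log r,0)$ and $\log_- r:=\max(-\log r,0)$. Then $u=u_+-u_-$ with $u_\pm(x):=\int\log_\pm|x-\zeta|\,d\mu(\zeta)$, so $\norm{u}_I\le\norm{u_+}_I+\norm{u_-}_I$. Let $A:=\max(1,\log M,-\log|I|,\log C_0,\log C_1)$ denote the target bound.

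For the regular piece $u_+$, I would use the pointwise estimate $\log_+|x-\zeta|\le\log(1+M+|\zeta|)$ valid on $I$ and rewrite the right-hand side via the layer-cake identity as $\int_0^\infty\mu(|\zeta|>e^t-M-1)\,dt$. Splitting at $t_*:=\log(M+C_0+C_1+2)$, the portion $t\le t_*$ contributes at most $t_*\lesssim 1+\log M+\log C_0+\log C_1$ (using $\mu(\R)\le1$), while for $t>t_*$ the hypothesis gives $\mu(|\zeta|>e^t-M-1)\le C_0/(e^t-M-1)$, whose integral is $O(1)$ by a routine substitution. Hence $\norm{u_+}_I\le\sup_I u_+\lesssim A$.

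The bulk of the work is the singular piece $\norm{u_-}_I$: a naive Cauchy--Schwarz bound yields only $\norm{u_-}_I\lesssim|I|^{-1/2}$, which is far too weak. The trick is to apply layer cake to the integrand itself,
\[
u_-(x)=\int_0^\infty\mu\bigl(\{\zeta:|x-\zeta|<e^{-s}\}\bigr)\,ds,
\]
and then use Minkowski's integral inequality in $L^2(I,m_I)$, reducing matters to estimating $\norm{g_r}_I$ where $g_r(x):=\mu(\{\zeta:|x-\zeta|<r\})$. Expanding $g_r(x)^2$ as a double $\mu$-integral and applying Fubini gives
\[
\int_I g_r(x)^2\,dx=\iint\bigl|\{x\in I:|x-\zeta|<r,\,|x-\eta|<r\}\bigr|\,d\mu(\zeta)\,d\mu(\eta)\le\min(2r,|I|),
\]
so $\norm{g_r}_I\le\min(\sqrt{2r/|I|},1)$. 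With $r=e^{-s}$, splitting the $s$-integral at $s=\log(2/|I|)$ yields $\norm{u_-}_I\lesssim\int_0^\infty\min(\sqrt{2e^{-s}/|I|},1)\,ds\lesssim 1+\max(0,-\log|I|)$.

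Finally, $I\subset[-M,M]$ forces $|I|\le 2M$, hence $\log|I|\le 1+\log M$ and $|\log|I||\lesssim A$ in every case. Combining the two pieces gives $\norm{u}_I\lesssim A$, as desired. The principal obstacle is precisely this sharp logarithmic estimate on $u_-$; the layer-cake-plus-Fubini chain replaces the singular behavior of $\log$ near $0$ by the well-behaved maximal-type function $\mu(\{\zeta:|x-\zeta|<r\})$, which captures the correct dependence on $|I|$.
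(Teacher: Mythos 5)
Your proof is correct, and it takes a genuinely different route from the paper's. The paper treats $|\log|x-\zeta||$ as a single non-negative random variable on the product space $(\R\times I,\mu\times m_I)$, establishes an exponential tail bound
\[
\mu\times m_I\bigl(|\log|x-\zeta||>K\bigr)\le e^{-K/2},\quad K\gtrsim\max(1,\log M,-\log|I|,\log C_0,\log C_1),
\]
by handling the two events $\{\log|x-\zeta|<-K\}$ (controlled by the smallness of $|I|$-slices of $\{|x-\zeta|<e^{-K}\}$) and $\{\log|x-\zeta|>K\}$ (controlled by the tail of $\mu$) separately, and then invokes \cref{lem:abstract-moment-estimate} to get the $L^2(\mu\times m_I)$ norm, finishing with Cauchy--Schwarz in the $\mu$-integral. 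Your argument instead splits $u=u_+-u_-$ and treats each piece directly in $L^2(I,m_I)$: a layer-cake bound in the $\mu$-variable for $u_+$, and for $u_-$ a layer cake in the level set variable combined with Minkowski's integral inequality and the Fubini computation $\int_I\mu(|x-\zeta|<r)^2\,dx\le\min(2r,|I|)$. Both approaches capture the same two phenomena (the $-\log|I|$ contribution from the singularity, the $\log M+\log C_0+\log C_1$ contribution from the tails), but the paper's method is shorter because it recycles the already-available abstract moment lemma and handles both regimes with a single tail estimate, whereas yours is self-contained, more elementary, and isolates cleanly why the $-\log|I|$ term is sharp via the maximal-type function $\mu(|x-\zeta|<r)$. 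Your steps all check out (in particular the Minkowski reduction, the Fubini bound, and the observation $|I|\le 2M$ needed to absorb $|\log|I||$ into $A$ in all cases).
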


\begin{proof}
	We have
	\begin{multline*}
		\mu\times m_I(\log|x-\zeta|<-K)=\int_\R \int_I \ind_{\{ |x-\zeta|<\exp(-K) 
		\}}\,dm_I(x)d\mu(\zeta)\\
		\le \int_\R \frac{2\exp(-K)}{|I|}\,d\mu(\zeta)\le\frac{2\exp(-K)}{|I|},
	\end{multline*}
	and
	\begin{multline*}
		\mu\times m_I(\log|x-\zeta|>K)
		\le \mu(|\zeta|>\exp(K)/2)+m_I(|x|>\exp(K)/2)\\
		=\mu(|\zeta|>\exp(K)/2)\le 2C_0 \exp(-K),
	\end{multline*}
	provided $ K\gg \max(1,\log M,\log C_1) $. It follows that
	\begin{equation*}
		\mu\times m_I(|\log|x-\zeta||>K)\le \exp(-K/2),
	\end{equation*}
	provided that $ K\gg \max(1,\log M,-\log|I|,\log C_0,\log C_1) $. Now the conclusion follows from 
	\cref{lem:abstract-moment-estimate} and the Cauchy-Schwarz inequality.
\end{proof}

For the convenience of the reader we state the basic general estimates on variance that we will be 
using.
\begin{lemma}\label{lem:var-elementary-estimates}
 Let
 $ (\Omega,\mc F,\mu) $
 be a probability space.
 \begin{enumerate}[(i)]
   \item
   If
   $ X $,
   $ Y $
   are square summable random variables then
   \begin{equation}
     |\var(X)-\var(Y)|\le \norm{X-Y}
       \left(\norm{X}+\norm{Y}\right),
   \end{equation}
	where $ \norm{\cdot} $ is the $ L^2 $ norm.
   \item
   If
   $ X $
   is a square summable random variable and
   $ \mc F_i $,
   $ i=1,\ldots,n $
   are pairwise independent
   $\sigma $-subalgebras of
   $ \mc F $
   then
   \begin{equation}\label{eq:var-Bessel}
   \var(X)\ge \sum_{i=1}^{n}\var(\ex(X\vert\mc F_i)).
   \end{equation}

   \item
   If
   $ X $
   is a square summable random variable and
   $ \mu_0 $
   is a probability measure  such that
   $ \mu \ge c\mu_0 $,
   with
   $ c\ge0 $,
   then
   \begin{equation}\label{eq:var-var>cvar}
   \var(X)\ge c \var_{\mu_0}(X).
   \end{equation}
  \end{enumerate}
\end{lemma}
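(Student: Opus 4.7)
The plan is to reduce each of (i)--(iii) to standard $ L^2(\mu) $ manipulations after centering, with the Pythagorean theorem doing the real work in (ii) and the variational characterization of the variance doing the work in (iii).

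For (i), I would write $ \tilde X:=X-\ex X $ and $ \tilde Y:=Y-\ex Y $ and factor
$ \var(X)-\var(Y)=\norm{\tilde X}^2-\norm{\tilde Y}^2=(\norm{\tilde X}-\norm{\tilde Y})(\norm{\tilde X}+\norm{\tilde Y}) $.
The second factor is at most $ \norm{X}+\norm{Y} $ because centering never increases the $ L^2 $ norm (we have $ \norm{X}^2=\norm{\tilde X}^2+(\ex X)^2 $). The first factor is controlled by $ \norm{\tilde X-\tilde Y} $ via the reverse triangle inequality, and $ \tilde X-\tilde Y $ is just the centered version of $ X-Y $, so its norm is at most $ \norm{X-Y} $ by the same centering observation.

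For (ii), set $ Y_i:=\ex(X\vert\mc F_i)-\ex X $. The key step is to use pairwise independence to upgrade to pairwise orthogonality: for $ i\ne j $, independence of $ \mc F_i $ and $ \mc F_j $ gives $ \ex[\ex(X\vert\mc F_i)\,\ex(X\vert\mc F_j)]=(\ex X)^2 $, whence $ \ex(Y_iY_j)=0 $. A short computation using the defining property of conditional expectation then yields $ \ex[Y_i(X-\ex X-\sum_k Y_k)]=0 $ for each $ i $, so the residual $ X-\ex X-\sum_k Y_k $ is orthogonal to the span of the $ Y_i $. Pythagoras then delivers $ \var(X)\ge\sum_i\norm{Y_i}^2=\sum_i\var(\ex(X\vert\mc F_i)) $.

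For (iii), I would invoke the variational identity $ \var_\mu(X)=\inf_{a\in\R}\int(X-a)^2\,d\mu $. Writing $ a_*:=\ex_\mu X $, the pointwise measure inequality $ \mu\ge c\mu_0 $ gives $ \var_\mu(X)=\int(X-a_*)^2\,d\mu\ge c\int(X-a_*)^2\,d\mu_0\ge c\var_{\mu_0}(X) $, the last inequality because the integral on the right, viewed as a function of $ a $, is minimized at $ \ex_{\mu_0}X $ rather than $ a_* $. There is no genuine obstacle anywhere; the only point that deserves care is the orthogonality in (ii), where one must check that merely pairwise (rather than mutual) independence is enough to kill the cross-terms $ \ex(Y_iY_j) $, which is exactly what the simple factorization above is designed to make explicit.
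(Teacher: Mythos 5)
Your proof of all three parts is correct and uses the standard arguments one would expect for such elementary facts; the paper itself states \cref{lem:var-elementary-estimates} without proof, so there is no competing argument to compare against. In (ii) you correctly identify the one point requiring care — that pairwise independence already gives $\ex(Y_iY_j)=0$ and hence both the Pythagorean expansion of $\norm{\sum_k Y_k}^2$ and the orthogonality of the residual $X-\ex X-\sum_k Y_k$ to each $Y_i$ — and in (iii) the variational characterization $\var_{\mu_0}(X)=\inf_a\int(X-a)^2\,d\mu_0$ cleanly handles the fact that $\ex_\mu X$ need not equal $\ex_{\mu_0}X$.
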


\begin{proof}(of \cref{thm:var-lower_bound})
	By the Bessel type inequality \eqref{eq:var-Bessel} we get
	\begin{equation*}
		\var(\log|f_{\Lambda}^E| )\ge \sum_{k\in\Lambda} \var\left(\ex\left(\log|f_\Lambda^E|\big\vert V_k\right)\right).
	\end{equation*}
	We now just have to provide a lower bound for each term on the right-hand side of the above inequality. 
	We will achieve this by applying \cref{prop:estimate-for-log-potential}. 
	
	First we construct the logarithmic potential to which we will apply \cref{prop:estimate-for-log-potential}. 	
	We factorize $ f_{\Lambda}^E  $ by using Schur's formula (see for example \cite[Theorem 1.1]{Zha-05-Schur}). 
	In an appropriate basis we can write
	\begin{equation*}
		H_\Lambda-E=\begin{bmatrix}
			V_k-U_{k_1}(k_2,k_2)-E & \Gamma \\
			\Gamma^t & H_{\Lambda\setminus\{ k \}}-E
		\end{bmatrix},
	\end{equation*}
	where
	\begin{equation}\label{eq:gamma}
		\Gamma(k,j)=\begin{cases}
			-1& ,\text{ if } k_2=j_2 \text{ and } |k_1-j_1|=1\\
			-U_{k_1}(k_2,j_2)&, \text{ if } k_1=j_1 \text{ and } |k_2-j_2|\le d\\
			0&, \text{ otherwise}
		\end{cases}
	\end{equation}
	(the rows and columns are labeled by the indices of the potentials that they contain).
	By Schur's formula we have
	\begin{equation*}
		f_\Lambda^E=(V_k-\xi_k)
			\det(H_{\Lambda\setminus \{ k \}}-E),
	\end{equation*}
	where
	\begin{equation}\label{eq:xi_k}
		\xi_k=U_{k_1}(k_2,k_2)+E+\Gamma(H_{\Lambda\setminus \{ k \}}-E)^{-1}\Gamma^t.
	\end{equation}
	Since 
	$ \det(H_{\Lambda\setminus \{ k \}}-E) $ is independent of $ V_k $ it follows that
	\begin{equation*}
		\var\left(\ex\left(\log|f_\Lambda^E|\big\vert V_k\right)\right)
		=\var\left(\ex\left(\log|V_k-\xi_k|\big\vert V_k\right)\right).
	\end{equation*}
	and $ \var(h_k)=\var(u_k) $, with
	Let $ \mu_k $ be defined by
	$ \mu_k(S)=\pr(\xi_k\in S) $. Then we have
	\begin{equation*}
	 	u_k(x):=\ex\left(\log|V_k-\xi_k|\big\vert V_k\right)(x)=\int_{\R}\log|x-\zeta|\,d\mu_k(\zeta).
	\end{equation*} 
	Now that we have the logarithmic potential $ u_k $ we set things up for applying 
	\cref{prop:estimate-for-log-potential}. Let $ I=[A_0 R_0,A_1 R_0] $, with $ A_0,A_1,R_0>0 $ to be 
	chosen later. The proof is the 
	same for the case $ R_0<0 $ (corresponding to the case $ I=[-M_0,-C_1M_0] $ from the statement of 
	the theorem). By 
	\cref{lem:var-elementary-estimates} (iii) we have
	\begin{equation*}
		\var(u_k)\ge (\inf_I \rho)|I| \var_I(u_k).
	\end{equation*} 
	Let $ \mu_{k,1} $ and $ \mu_{k,2} $
	be defined by
	\begin{equation*}
		\mu_{k,1}(S)=\mu_k(S\cap[-R_0,R_0]),\qquad \mu_{k,2}(S)=\mu_k(S\setminus[-R_0,R_0]).
	\end{equation*}
	Let 
	$ u_{k,i}(x)=\int_{\R}\log|x-\zeta|\,d\mu_{k,i}(\zeta) $, $ i=1,2 $.
	By applying \cref{prop:estimate-for-log-potential} to $ u_{k,1}/\mu_k([-R_0,R_0]) $ we obtain
	\begin{equation*}
		\var_I(u_{k,1})
		\ge \frac{1}{2} (\mu_k([-R_0,R_0]))^{2},
	\end{equation*}
	provided $ 1\ll A_0\ll A_1 $.
	From \cite[Theorem II.1]{AM-93-Localization} we have
	\begin{equation}\label{eq:Wegner-for-entries}
		\pr(|(H_{\Lambda}-E)^{-1}(i,j)|\ge T)\lesssim D_0/T,
	\end{equation}
	for any $ \Lambda $ and any
	$ i,j\in \Lambda $. From this estimate and the integrability assumption \cref{eq:intro-densitydecay} 
	(see also \cref{eq:gamma} and \cref{eq:xi_k}) it follows that
	\begin{equation*}
		\mu_k(|\zeta|>R)=\pr(|\xi_k|>R)\le C d^2 (d^2/R)^{1/3},
	\end{equation*}
	for any $ R\gg d^2 |E| $ and with $ C=C(D_0,D_1) $ . As a consequence we get that
	\begin{equation*}
		(\mu_k([-R_0,R_0]))^{2}\ge\left( 1-C(d^8/R_0)^{1/3} \right)^2\ge 1/2,
	\end{equation*}
	provided $ R_0\ge C(D_0,D_1,|E|) d^8 $. So, if $ R_0 $ is large enough then 
	$ \var_I(u_{k,1})\ge 1/4 $. 
	
	The last step is to see that the bound on the fluctuations of $ u_{k,1} $
	implies a bound for the fluctuations of $ u_k $.
	By \cref{lem:var-elementary-estimates} (i) we have
	\begin{equation*}
		|\var_I(u_k)-\var_I(u_{k,1})|\le \norm{u_{k,2}}_{I}(\norm{u_{k,1}}_{I}+\norm{u_k}_{I}). 
	\end{equation*}
	From the Cauchy-Schwarz inequality we get
	\begin{equation*}
		\norm{u_{k,2}}_{I}\le \sqrt{\mu_k(|\zeta|>R_0)}\norm{u_k}_I\le C (d^8/R_0)^{1/6} \norm{u_k}_I.
	\end{equation*}	
	Now \cref{lem:log_potential-moments} implies that
	\begin{equation*}
		\norm{u_{k,2}}_{I}(\norm{u_{k,1}}_{I}+\norm{u_k}_{I})
		\le C(d^8/R_0)^{1/6}\log^2R_0\le 1/8,
	\end{equation*}
	provided $ R_0\ge C(D_0,D_1,|E|)d^9 $. Hence we have $ \var_I(u_k)\ge 1/8 $.
	
	We conclude that
	\begin{multline*}
		\var(\log|f_\Lambda^E|)\ge \sum_{k\in \Lambda} \var(h_k)\\=
		\sum_{k\in \Lambda} \var(u_k)\ge \sum_{k\in \Lambda} |I|(\inf_I \rho)\var_I(u_k)
		\ge |\Lambda||I|(\inf_I \rho)/8,
	\end{multline*}
	for any $ I=[A_0 R_0,A_1 R_0] $ with $ R_0\ge C(D_0,D_1,|E|)d^9 $ and $ A_1\gg A_0\gg 1 $. 
\end{proof}

\section{Large Deviations Estimate}\label{sec:ldt}

In this section we will prove \cref{thm:ldt}. The main idea is that $ \log|f_\Lambda^E| $ can be approximated 
 by a sum of independent random variables (alas, the error term is quite large). Namely, if $ \{ \Lambda_i \} $ is a partition of
$ \Lambda $  then we will see that
\begin{equation}\label{eq:almost-factorization}
	\log|f_\Lambda^E|\approx \sum_i \log|f_{\Lambda_i}^E|.
\end{equation}
The precise formulation is \cref{lem:Cartan-estimates} (b).
Once this is established we will obtain the large deviations estimates by applying the following exponential bound due to Bernstein (see \cite[Thm. 2.8]{Pet-95-Limit}).

\begin{theorem}[Bernstein]\label{thm:Bernstein}
	Let $ X_i $ be independent random variables such that $ \ex X_i=0 $, $ 
	i=1,\ldots,n $. Suppose that there exist positive constants $\sigma$ and $ T $ such that
	\begin{equation*}
		|\ex X_i^m|\le \frac{1}{2}m!\sigma^2 T^{m-2},~i=1,\ldots,n
	\end{equation*}
	for all integers $ m\ge 2 $. Then
	\begin{equation*}
		\pr \left( \left| \sum_{i=1}^{n}X_i \right|\ge x \right)\le \exp(-x/4T),\text{ if } x\ge 
		n\sigma^2/T.
	\end{equation*}
\end{theorem}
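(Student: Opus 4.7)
The plan is to use the classical Chernoff/moment-generating-function route. For $t\in(0,1/T)$ and the independence of the $X_i$, we have
\begin{equation*}
\pr \left( \sum_{i=1}^n X_i \ge x \right)
\le e^{-tx}\,\ex \left( e^{t\sum_i X_i} \right)
= e^{-tx}\prod_{i=1}^{n}\ex \left( e^{tX_i} \right),
\end{equation*}
so everything reduces to bounding a single $\ex(e^{tX_i})$. The bound on $\ex(X_i^m)$ is tailor-made for a Taylor expansion, and this will be the first substantive step.

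Next I would expand $e^{tX_i}=1+tX_i+\sum_{m\ge 2}(tX_i)^m/m!$, take expectations, and use $\ex X_i=0$ together with the moment hypothesis to obtain
\begin{equation*}
\ex \left( e^{tX_i} \right)\le 1+\frac{\sigma^2 t^2}{2}\sum_{m\ge 2}(tT)^{m-2}
= 1+\frac{\sigma^2 t^2/2}{1-tT}.
\end{equation*}
The inequality $1+y\le e^y$ then yields $\prod_i\ex(e^{tX_i})\le \exp\bigl(n\sigma^2 t^2/(2(1-tT))\bigr)$, so
\begin{equation*}
\pr \left( \sum_i X_i\ge x \right)\le \exp \left( -tx+\frac{n\sigma^2 t^2}{2(1-tT)} \right).
\end{equation*}

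The remaining task is to choose $t$ to produce the stated rate $\exp(-x/(4T))$. The natural choice is $t=1/(2T)$, which makes $1-tT=1/2$ and reduces the exponent to $-x/(2T)+n\sigma^2/(4T^2)$. The hypothesis $x\ge n\sigma^2/T$ is exactly what forces the second term to be dominated by half of the first, giving $\exp(-x/(4T))$. Finally, applying the same argument to $-X_i$ (whose moments satisfy the same bound in absolute value) and adding the two tails handles $|\sum_i X_i|$; the factor $2$ can be absorbed into the bound by adjusting the choice of $t$ or by noting it is harmless on the scale being considered.

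I do not anticipate a genuine obstacle here — the argument is standard, and the one place where care is required is the verification that $t=1/(2T)$ both lies in the allowed range $(0,1/T)$ and exploits the assumption $x\ge n\sigma^2/T$ sharply enough to match the stated constant $1/4$. If one wanted a cleaner constant, one could instead optimize $t=x/(n\sigma^2+Tx)$, but the fixed choice $t=1/(2T)$ is the shortest route to the form quoted in the statement.
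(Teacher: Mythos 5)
The paper does not prove this theorem; it is imported verbatim as a citation (Petrov, Theorem 2.8), so there is no internal proof to compare against. Your Chernoff/moment-generating-function argument is the standard proof of Bernstein's exponential bound and is essentially the same route Petrov takes: expand $\ex(e^{tX_i})$, use the moment hypothesis to sum the geometric series, choose $t=1/(2T)$, and invoke $x\ge n\sigma^2/T$ to absorb the variance term into half the linear term. The computation is correct as far as it goes.

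The one point where your write-up hand-waves is the factor of $2$ coming from the two-sided bound. Your argument cleanly gives
\begin{equation*}
\pr\left(\sum_i X_i\ge x\right)\le \exp(-x/4T),\qquad
\pr\left(\sum_i X_i\le -x\right)\le \exp(-x/4T),
\end{equation*}
so a union bound yields $2\exp(-x/4T)$, not $\exp(-x/4T)$. You suggest this can be removed by adjusting $t$; that does not actually work uniformly. With $t=a/T$, the best one-sided exponent under the constraint $x\ge n\sigma^2/T$ is $-c\,x/T$ with $c=\sup_{a\in(0,1)}\frac{a(2-3a)}{2(1-a)}\approx 0.268$, and $(c-1/4)x/T$ need not exceed $\log 2$ for $x$ near the threshold $n\sigma^2/T$ when $T$ is large. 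So the factor $2$ is genuine in this formulation. In practice it is harmless — Petrov's original statement carries the factor $2$, and the paper's application (in the proof of the large-deviations proposition) already absorbs several such constants into a $5\exp(-K)\le\exp(-K/2)$ step — but your claim that it can be eliminated by tweaking $t$ should either be dropped or replaced with the honest observation that the two-sided bound is $2\exp(-x/4T)$ and this loss is immaterial.

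One further small remark: the hypothesis controls $|\ex X_i^m|$, not $\ex|X_i|^m$, so the termwise interchange of sum and expectation in $\ex(e^{tX_i})=\sum_m t^m\ex X_i^m/m!$ requires a word of justification. Since $\ex X_i^{2k}=\ex|X_i|^{2k}$, the even-moment bounds plus Cauchy--Schwarz control all absolute moments with comparable constants, so $\ex(e^{t|X_i|})<\infty$ for $t<1/T$ and Fubini applies. Worth a sentence if this were to stand as a self-contained proof.
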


So, our first goal is to obtain \cref{eq:almost-factorization}. This will be a consequence of the following
general result. We use the notation $ \log^\pm:=\max(\pm\log,0) $.

\begin{lemma}\label{lem:interlacing}
	Let $ H_1 $ and $ H_2 $ be two self-adjoint operators on the same finite dimensional vector space. Then
	for any $ E\in \R $ we have
	\begin{multline*}
		\log|\det(H_1-E)|-\log|\det(H_2-E)|\\
		\le 4 
		\rank(H_1-H_2)\max(\log^+(|E|+\norm{H_1}),\log^-\dist(E,\spec H_2)).
	\end{multline*}
\end{lemma}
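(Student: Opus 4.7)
The strategy is to express the ratio of determinants as $\det(I+M)$ for a low-rank operator $M$, and then bound each nontrivial factor of the resulting finite product via a pointwise eigenvalue estimate. First, dispose of the degenerate cases: if $E\in\spec H_2$ then $\log^-\dist(E,\spec H_2)=+\infty$ and the inequality is vacuous, while if $\det(H_1-E)=0$ the left-hand side is $-\infty$ and there is nothing to prove. So assume both $H_1-E$ and $H_2-E$ are invertible, set $r=\rank(H_1-H_2)$ and $M=(H_2-E)^{-1}(H_1-H_2)$; factoring $H_1-E=(H_2-E)+(H_1-H_2)$ yields
\begin{equation*}
  \frac{\det(H_1-E)}{\det(H_2-E)} = \det(I+M).
\end{equation*}

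Next, since $\rank(M)\le r$, the operator $M$ has at most $r$ nonzero eigenvalues $\lambda_1,\ldots,\lambda_s$ ($s\le r$, counted with algebraic multiplicity), the remaining eigenvalues of $I+M$ being equal to $1$; hence $\log|\det(I+M)|=\sum_{i=1}^s\log|1+\lambda_i|$. The key step is a pointwise bound on each $|1+\lambda_i|$. If $Mv=\lambda v$ with $v\ne 0$, then applying $H_2-E$ gives $(H_1-H_2)v=\lambda(H_2-E)v$, so
\begin{equation*}
  (H_1-E)v=(H_2-E)v+(H_1-H_2)v=(1+\lambda)(H_2-E)v.
\end{equation*}
Taking norms and using that $H_2$ self-adjoint implies $\norm{(H_2-E)v}\ge\dist(E,\spec H_2)\norm{v}$ on $\C^n$, together with $\norm{(H_1-E)v}\le(|E|+\norm{H_1})\norm{v}$, one concludes
\begin{equation*}
  |1+\lambda|\le\frac{|E|+\norm{H_1}}{\dist(E,\spec H_2)}.
\end{equation*}

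Combining these, each $\log|1+\lambda_i|$ is at most $\log^+(|E|+\norm{H_1})+\log^-\dist(E,\spec H_2)\le 2\max(\log^+(|E|+\norm{H_1}),\log^-\dist(E,\spec H_2))$, and summing over the $s\le r$ nonzero eigenvalues yields the stated inequality (with the factor $4$ in the statement providing ample slack relative to the $2r$ one actually obtains). There is no serious obstacle here; the only mild subtlety is that $M$ need not be self-adjoint, so its eigenvalues can be complex, but $H_2-E$ is normal on $\C^n$ and the inequality $\norm{(H_2-E)v}\ge\dist(E,\spec H_2)\norm{v}$ persists for $v\in\C^n$, so the eigenvalue argument is carried out verbatim over $\C$.
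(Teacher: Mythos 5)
Your proof is correct, and it takes a genuinely different route from the paper's. The paper invokes Weyl's interlacing inequalities for the eigenvalues of two self-adjoint operators whose difference has rank $r$: the eigenvalues $E_j^1$ of $H_1$ are paired with shifted eigenvalues $E_{j\pm r}^2$ of $H_2$, so most terms $\log|E_j^1-E|-\log|E_{j\pm r}^2-E|$ are nonpositive and only the $O(r)$ unmatched ``boundary'' indices (those $j$ that fall outside the shift range) contribute, each bounded directly by $\max(\log^+(|E|+\norm{H_1}),\log^-\dist(E,\spec H_2))$; since up to $2r$ indices appear on each side one gets the factor $4r$. You instead write $\det(H_1-E)=\det(H_2-E)\det(I+M)$ with $M=(H_2-E)^{-1}(H_1-H_2)$ of rank at most $r$, so $\det(I+M)$ is a product of at most $r$ nontrivial factors $1+\lambda_i$, and each is bounded pointwise by $(|E|+\norm{H_1})/\dist(E,\spec H_2)$ via the eigenvector identity $(H_1-E)v=(1+\lambda)(H_2-E)v$. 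This is more elementary (no interlacing), gives the one-sided form directly because the bound only needs $\norm{H_1}$ and $\dist(E,\spec H_2)$, and in fact yields the sharper constant $2r$ in place of $4r$. The handling of degenerate cases, of complex eigenvalues of the non-self-adjoint $M$, and of the distinction between algebraic and geometric multiplicity (each pointwise bound needs only one eigenvector per distinct eigenvalue, and then applies to every occurrence in the characteristic-polynomial product) are all correctly addressed.
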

\begin{proof}
	Let $ N $ be the dimension of the vector space and let $ E^i_j $, $ j=1,\ldots,N $ be the eigenvalues of 
	$ H_i $,
	arranged in increasing order. We will use $ r $ to denote the rank of $ H_1-H_2 $. It is known that we have 
	the following 	interlacing inequalities due to Weyl (see \cite[Thm. 4.3.6]{HJ-85-Matrix}):
	\begin{equation*}
		E_j^1\le E_{j+r}^2,~j=1,\ldots,N-r,
	\end{equation*} 
	\begin{equation*}
		E_{j-r}^2\le E_j^1,~j=r+1,\ldots,N.
	\end{equation*}
	Let
	\begin{equation*}
		G_-= \{ j:~E_j^1-E<0,\,j\ge r+1 \},\quad G_+= \{ j:~E_j^1-E\ge 0,\,j\le N-r \}.
	\end{equation*}
	Then from the interlacing inequalities it follows 
	that
	\begin{equation*}
		|E_j^1-E|\le|E^2_{j-r}-E|,\,j\in G_-,\qquad |E^1_j-E|\le|E^2_{j+r}-E|,\,j\in G_+.
	\end{equation*}
	Let
	\begin{equation*}
		B_1=[1,N]\setminus(G_-\cup G_+),\quad B_2=[1,N]\setminus\left((-r+G_-)\cup (r+G_+)\right). 
	\end{equation*}
	Then we have
	\begin{multline*}
		\sum \log|E^1_j-E|-\sum \log |E^2_j-E|\\
		=\sum_{j\in G_-}(\log|E^1_j-E|-\log|E^2_{j-r}-E|)+\sum_{j\in G_+}(\log|E^1_j-E|-\log|E^2_{j+r}-E|)\\
			+\sum_{j\in B_1} \log |E^1_j-E|-\sum_{j\in B_2}\log|E^2_j-E|\\ 
		\le \sum_{j\in B_1} \log |E^1_j-E|-\sum_{j\in B_2}\log|E^2_j-E|\\
		\le 4r\max(\log^+(|E|+\norm{H_1}),\log^-\dist(E,\spec H_2)).
	\end{multline*}
	We used the fact that $ |B_1|,|B_2|\le 2r $. This concludes the proof.
\end{proof}
Of course, from the above lemma it follows that
\begin{multline*}
	|\log|\det(H_1-E)|-\log|\det(H_2-E)||\\
	\le 4 
	\rank(H_1-H_2) \max_i \max(\log^+(|E|+\norm{H_i}),\log^-\dist(E,\spec H_i)),
\end{multline*}
which is what interests us at the moment, but the one sided estimate stated in the lemma will be needed later
in \cref{lem:Dirichlet-vs-minors}.

We are ready to make \cref{eq:almost-factorization} precise. We also prove a related estimate needed for obtaining bounds on the moments of $ \log|f_\Lambda^E| $. Given two sets $ \Lambda_0\subset\Lambda\subset \Z_W $ we use $ \partial_\Lambda \Lambda_0 $ to denote the set
of $ i\in \Lambda\setminus \Lambda_0  $ such that there exists $ j\in\Lambda_0 $ that has a bond to $ i $.

\begin{lemma}\label{lem:Cartan-estimates}
	Let $ \Lambda\subset \Z_W $ and $ E\in \R $. There exists $ C_0=C_0(D_0,D_1,|E|) $ such that the following 	statements are true for all $ K\ge C_0(1+\log|\Lambda|) $.
	\begin{enumerate}[(a)]
		\item We have
		 \begin{equation*}
			\pr(|\log|f_\Lambda^E||> |\Lambda|K)\le\exp(-K/4).
		\end{equation*}
		\item If $ \{ \Lambda_i \} $ is a partition of $ \Lambda $ then
		\begin{equation*}
			\pr\left(\left|\log|f_\Lambda^E|-\sum_i \log|f_{\Lambda_i}^E|\right|
				> 4|\cup_i \partial_\Lambda \Lambda_i |K\right)\le \exp(-K/4).
		\end{equation*}	
	\end{enumerate}
\end{lemma}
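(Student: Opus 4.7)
The plan is to reduce both parts to tail bounds on the single quantity
\begin{equation*}
M_\Lambda := \max\bigl(\log^+(|E|+\norm{H_\Lambda}),\,\log^-\dist(E,\spec H_\Lambda)\bigr),
\end{equation*}
which controls the size of any individual factor $\log|E_j-E|$, where $\{E_j\}$ are the eigenvalues of $H_\Lambda$. For (a), I would start from $f_\Lambda^E=\prod_{j=1}^{|\Lambda|}(E_j-E)$, observe that each $|\log|E_j-E||$ is bounded by $M_\Lambda$, and conclude $|\log|f_\Lambda^E|| \le |\Lambda|M_\Lambda$. For (b), I would introduce $H_0 := \bigoplus_i H_{\Lambda_i}$, for which $\log|\det(H_0-E)| = \sum_i \log|f_{\Lambda_i}^E|$. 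Since $H_0$ agrees with $H_\Lambda$ on intra-block entries and vanishes off-block, the perturbation $H_\Lambda - H_0$ consists only of inter-block bonds and is supported in $(\cup_i\partial_\Lambda\Lambda_i)\times(\cup_i\partial_\Lambda\Lambda_i)$, so $\rank(H_\Lambda-H_0) \le r:=|\cup_i\partial_\Lambda\Lambda_i|$. Applying \cref{lem:interlacing} to the pairs $(H_\Lambda,H_0)$ and $(H_0,H_\Lambda)$ then gives
\begin{equation*}
\bigl|\log|f_\Lambda^E| - \textstyle\sum_i \log|f_{\Lambda_i}^E|\bigr| \le 4r\max(M_\Lambda,M_0),
\end{equation*}
where $M_0$ is the analogue of $M_\Lambda$ for $H_0$.

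The remaining task is the tail bound $\pr(M_\Lambda > K) \le \tfrac12\exp(-K/2)$ (and the same for $M_0$) whenever $K \ge C_0(1+\log|\Lambda|)$, with $C_0=C_0(D_0,D_1,|E|)$. For the norm term I would use the crude deterministic estimate $\norm{H_\Lambda} \le 2+\max_k|V_k|+\max_n\norm{U_n}$ (maxima over indices associated with $\Lambda$) together with a union bound on \eqref{eq:intro-densitydecay}, giving $\pr(\norm{H_\Lambda}>T) \lesssim D_1|\Lambda|/T$; at the scale $T=e^K-|E|$ this becomes $\lesssim D_1|\Lambda|e^{-K}$. For the spectral-gap term I would bound $\norm{(H_\Lambda-E)^{-1}} \le |\Lambda|\max_{i,j}|(H_\Lambda-E)^{-1}(i,j)|$ via the Frobenius inequality and apply \eqref{eq:Wegner-for-entries} with a union bound over the $|\Lambda|^2$ entries, obtaining $\pr(\dist(E,\spec H_\Lambda) \le e^{-K}) \lesssim D_0|\Lambda|^3 e^{-K}$. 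Both estimates are of the form $\mathrm{poly}(|\Lambda|)\,e^{-K}$ and are absorbed into $\tfrac12e^{-K/2}$ once $K \ge C_0(1+\log|\Lambda|)$. This yields (a). For $M_0$ the same argument applies: $\norm{H_0}\le\norm{H_\Lambda}$ since $H_0$ is the block-diagonal compression, while $\pr(\dist(E,\spec H_0)\le t) \le \sum_i \pr(\dist(E,\spec H_{\Lambda_i})\le t) \lesssim D_0(\sum_i|\Lambda_i|^3)t \le D_0|\Lambda|^3 t$; combining these with the deterministic inequality from the first paragraph gives (b).

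The one mildly delicate input is the tail bound on $\dist(E,\spec H_\Lambda)$. A sharp Wegner estimate linear in $|\Lambda|$ could be obtained directly from the bounded density \eqref{eq:intro-densityub}, but routing through the entrywise bound \eqref{eq:Wegner-for-entries} is cleaner and only costs a polynomial factor in $|\Lambda|$, which is absorbed harmlessly by the hypothesis $K \ge C_0(1+\log|\Lambda|)$. Everything else amounts to routine book-keeping of constants, so I do not anticipate any further serious obstacles.
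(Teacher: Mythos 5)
Your proof is correct and takes essentially the same route as the paper's: bound $|\log|f_\Lambda^E||$ by $|\Lambda|\max(\log^+(|E|+\norm{H_\Lambda}),\log^-\dist(E,\spec H_\Lambda))$, control each term's tail via union bounds using \cref{eq:intro-densitydecay} and a Wegner-type estimate, and for (b) introduce $H_0=\oplus_i H_{\Lambda_i}$, bound $\rank(H_\Lambda-H_0)$ by $|\cup_i\partial_\Lambda\Lambda_i|$, and apply \cref{lem:interlacing}. The only (harmless) deviation is that for the spectral-gap tail you route through the entrywise resolvent bound \cref{eq:Wegner-for-entries} plus a Frobenius/union-bound step, losing an extra factor $|\Lambda|^2$ relative to the direct Wegner estimate the paper cites, but this is absorbed by the hypothesis $K\ge C_0(1+\log|\Lambda|)$.
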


\begin{proof}
	(a) We have
		\begin{equation*}
			|\log|f_\Lambda^E||\le |\Lambda| \max \left(\log^+(|E|+\norm{H_\Lambda}),\log^
							-\dist(E,\spec H_\Lambda)  \right).
		\end{equation*}
	It follows that if $ 
	\log|f_{\Lambda}^E|>K|\Lambda|  $, then either $ \log(|E|+\norm{H_\Lambda})>K $ or 
	\begin{equation*}
		\log\dist(E,\spec H_\Lambda)<-K. 
	\end{equation*}
	Since $ \norm{H_\Lambda}\le 2+\max_{i} |V_i|+\max_n \norm{U_n} $, it 
	follows from \cref{eq:intro-densitydecay} that
	\begin{multline*}
		\pr(\log(|E|+\norm{H_\Lambda})>K)\\ 
		\le \pr(\log(|E|+2+\max_{i} |V_i|+\max_n \norm{U_n})>K)\\
		\le |\Lambda|\left[\pr(|V_i|>\exp(K)/3)+\pr(\norm{U_n}\ge \exp(K)/3)\right]\\
		\lesssim D_1 |\Lambda|\exp(-K)\le \exp(-K/2),
	\end{multline*}
	provided $ K\ge C (1+\log|\Lambda|) $, with $ C=C(D_1,|E|) $.
	Wegner's estimate (see \cite[(2.4)]{CGK-09-Generalized}) implies that
	\begin{equation*}
		\pr(\log(\dist(E,\spec H_\Lambda))<-K)\lesssim D_0 |\Lambda| \exp(-K)\le \exp(-K/2),
	\end{equation*}
	provided $ K\ge C(1+\log|\Lambda|) $, with $ C=C(D_0) $. Now the desired estimate follows 
	immediately.
  
	\noindent (b) 	Let $ H_\Lambda'=\oplus_i H_{\Lambda_i} $. Since we have that 
	$ \rank(H_\Lambda-H_\Lambda')\le |\cup_i \partial_\Lambda \Lambda_i| $ (the vectors that vanish on 
	$ \cup_i \partial_\Lambda \Lambda_i $ are in the kernel of $ H_\Lambda-H_\Lambda' $) it follows from
	\cref{lem:interlacing} that	
	\begin{multline*}
		\left|\log|f_\Lambda^E|-\sum_i \log|f_{\Lambda_i}^E|\right|\\
		\le 4
		\left| \cup_i \partial_\Lambda \Lambda_i  \right| 
			\max \left( \log^+(|E|+\norm{H_\Lambda}), \log^-\dist(E, \cup_i\spec \Lambda_i \cup 
			\spec \Lambda)\right).
	\end{multline*}
	We used the fact that we obviously have
	\begin{equation*}
		\norm{H_\Lambda'}\le\max_i \norm{H_{\Lambda_i}}\le \norm{H_\Lambda}.
	\end{equation*} 
	The desired estimate follows analogously to the proof of (a).
\end{proof}

We are now ready to apply Bernstein's exponential bound.
The estimate depends on the moment estimates for $ \log|f_\Lambda^E| $. It turns out that the large 
deviations estimate that we obtain implies an improvement of the moment estimates which in turn lead to a
better large deviations estimate. So we will prove \cref{thm:ldt} through a recursion. We use the next 
proposition to facilitate the recursion.
 
\begin{proposition}\label{prop:ldt-from-moments}
	Let $ E\in \R $ and suppose that there exist positive constants $ C_0 $ and $ \delta_0\le 1/2 $ such that
	\begin{equation}\label{eq:moments-hypothesis}
		\norm{\log|f_{\Lambda}^E|-\ex(\log|f_{\Lambda}^E|) }_m
		\le m C_0|\Lambda|^{1/2+\delta_0}(1+\log|\Lambda|),
	\end{equation}
	for any rectangular $ \Lambda\subset \Z_W $ and any integer $ m\ge 2 $. Then there exists a constant 
	$ C_1=C_1(d,D_0,D_1,|E|) $ such that for $ K\gg 1 $ and any rectangular $ \Lambda $ 
	we have
	\begin{equation*}
		\pr\left(|\log|f_{\Lambda}^E|-\ex(\log|f_{\Lambda}^E| ) |
			>C_0C_1|\Lambda|^{1/2+c_0\delta_0}(1+\log|\Lambda|)K\right)
		\le \exp(-K/2),
	\end{equation*}
	with $ c_0=1/(1+2\delta_0) $. 	
\end{proposition}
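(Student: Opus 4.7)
The plan is to exploit the almost-factorization from Lemma~\ref{lem:Cartan-estimates}(b). I would partition $\Lambda$ into disjoint rectangular pieces $\{\Lambda_i\}$ of a common area $L$ (to be chosen), so that $\log|f_\Lambda^E|$ equals the independent sum $\sum_i\log|f_{\Lambda_i}^E|$ up to a boundary error, and then apply \cref{thm:Bernstein} to the centered sum $\sum_i X_i$, where $X_i:=\log|f_{\Lambda_i}^E|-\ex(\log|f_{\Lambda_i}^E|)$. The parameter $L$ will be tuned so that the Cartan boundary error and the Bernstein deviation are both of order $M:=C_0C_1|\Lambda|^{1/2+c_0\delta_0}(1+\log|\Lambda|)$.

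First I would fix $L$ and partition $\Lambda$ into $n\sim|\Lambda|/L$ pieces $\Lambda_i$, using approximately square sub-rectangles of side $\sqrt L$ when $L\le W^2$ and slabs of width $W$ otherwise; this keeps the total boundary $B:=|\cup_i\partial_\Lambda\Lambda_i|$ as small as possible, namely $B\lesssim|\Lambda|L^{-1/2}$ in the square regime and $B\lesssim W|\Lambda|/L$ in the slab regime. By Lemma~\ref{lem:Cartan-estimates}(b), with probability at least $1-\exp(-K/4)$ we have $\bigl|\log|f_\Lambda^E|-\sum_i\log|f_{\Lambda_i}^E|\bigr|\le 4BK$ for $K\ge C(1+\log|\Lambda|)$; integrating this tail via \cref{lem:abstract-moment-estimate} also controls the bias $\bigl|\ex\log|f_\Lambda^E|-\sum_i\ex\log|f_{\Lambda_i}^E|\bigr|\lesssim B(1+\log|\Lambda|)$. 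The $X_i$ are independent by the disjointness of the $\Lambda_i$, and the hypothesis~\eqref{eq:moments-hypothesis} applied on each piece gives $\|X_i\|_m\le mA$ with $A:=C_0L^{1/2+\delta_0}(1+\log|\Lambda|)$; Stirling then yields $|\ex X_i^m|\le m!(eA)^m$.

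The core step is to apply \cref{thm:Bernstein} not with the natural parameters $T\sim\sigma\sim A$ (which would reach only deviation $\sim KA$, too large) but with the tuned choice $T:=M/2$ and $\sigma^2:=M^2K/(2n)$, so that the Bernstein entry threshold $n\sigma^2/T$ equals exactly $MK$ and the exponent $x/(4T)$ at that point equals $K/2$. The moment hypothesis $|\ex X_i^m|\le(m!/2)\sigma^2T^{m-2}$ then reduces to $(2eA/M)^m\le K/n$ for every $m\ge 2$, which holds as soon as $nA^2\lesssim M^2$ and $K$ is above an absolute constant. Balancing $B\sim M$ and $nA^2\sim M^2$ forces $L\sim|\Lambda|^{1/(1+2\delta_0)}$; a short computation verifies that with this $L$ both $B$ and $\sqrt{nA^2}$ are $\lesssim C_0|\Lambda|^{1/2+c_0\delta_0}$ in the square and slab regimes alike, the two matching precisely at the crossover $|\Lambda|\sim W^{2(1+2\delta_0)}$. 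Combining the Cartan and Bernstein tail bounds, with $K$ replaced by an absolute multiple to absorb the bias, yields the desired estimate.

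The main obstacle is this unorthodox choice of the Bernstein parameters $T$ and $\sigma$ as functions of the target scale $M$ and the tail parameter $K$ rather than as intrinsic quantities derived from the moments of the $X_i$: this trick is precisely what lets the exponential-regime Bernstein of \cref{thm:Bernstein} effectively reach the Gaussian fluctuation scale $\sqrt{n}\,A$, which is the true scale of $\sum_i X_i$ and the source of the improved exponent $c_0\delta_0<\delta_0$. A secondary subtlety is checking that the square and slab partitioning regimes match cleanly at the crossover, so that $B\lesssim M$ holds uniformly in $W$.
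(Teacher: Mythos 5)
Your proposal is correct and follows essentially the same strategy as the paper: partition $\Lambda$ into cells of size $\sim|\Lambda|^{c_0}=|\Lambda|^{1-2c_0\delta_0}$, control the factorization error via Lemma~\ref{lem:Cartan-estimates}(b), and apply Bernstein with $T$ deliberately inflated to the target deviation scale rather than the natural moment scale. The paper handles the nonuniform cell sizes by grouping the grid cells into at most four classes of equal dimensions and rescaling $X_i$ by $A_k^{-(1/2+\delta_0)}$, while you choose adaptive cell shapes to keep a common area, but this is only a cosmetic difference in implementation.
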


\begin{proof}
	Let $ l $ denote the integer part of $ |\Lambda|^{1/2-c_0\delta_0} $ and let $ \{ \Lambda_i \} $ be the 	partition of $ \Lambda $ by the cells of the lattice 
	$ (l\Z)\times(l \Z) $ centered at the lower left corner 
	of $\Lambda$. Note that we have 
	\begin{equation*}
		|\Lambda_i|\le l^2 \le |\Lambda|^{1-2c_0\delta_0},\qquad
		|\cup_i \partial_\Lambda \Lambda_i|\lesssim d|\Lambda|/l\lesssim d|\Lambda|^{1/2+c_0\delta_0}.
	\end{equation*}
	We will obtain the conclusion from the inequality
	\begin{multline}\label{eq:ldt-triangle}
		\left| \log|f_\Lambda^E|-\ex(\log|f_\Lambda^E|) \right|
		\le \left| \log|f_\Lambda^E|-\sum_i\log|f_{\Lambda_i}^E| \right|
			+ \ex \left(\left| \log|f_\Lambda^E|-\sum_i\log|f_{\Lambda_i}^E| \right|\right)\\
			+ \left| \sum_i\left(\log|f_{\Lambda_i}^E|-\ex(\log|f_{\Lambda_i}^E|)\right) \right|, 			
	\end{multline}
	by estimating, with high probability, each of the terms on the right hand side.
	
	Applying \cref{lem:Cartan-estimates} (b)  we obtain
	\begin{multline}\label{eq:ldt-factorization}
		\left| \log|f_\Lambda^E|-\sum_i\log|f_{\Lambda_i}^E| \right|
		\le 4 C|\cup_i\partial_\Lambda \Lambda_i|(1+\log|\Lambda|)K\\
		\lesssim C d |\Lambda|^{1/2+c_0\delta_0}(1+\log|\Lambda|)K,
	\end{multline}
	except for a set of measure smaller than
	\begin{equation*}
		\exp(-C(1+\log|\Lambda|)K/4)\le \exp(-K),
	\end{equation*}
	provided $ C=C(D_0,D_1,|E|)\ge 4 $.
	
	From \cref{lem:Cartan-estimates} (b) and \cref{lem:abstract-moment-estimate} we get
	\begin{multline}\label{eq:ldt-average-factorization}
		\ex \left(\left| \log|f_\Lambda^E|-\sum_i\log|f_{\Lambda_i}^E| \right|\right)
		\le C |\cup_i\partial_\Lambda \Lambda_i|(1+\log |\Lambda|)\\
		\lesssim C d|\Lambda|^{1/2+c_0\delta_0}(1+\log |\Lambda|),
	\end{multline}
	with $ C=C(D_0,D_1,|E|)$.
	
	To estimate the last term on the right-hand side of \cref{eq:ldt-triangle} we will use 
	\cref{thm:Bernstein}. For this we need to estimate the number of sets in the partition 
	$ \{ \Lambda_i \} $. Depending on the proportions of $ \Lambda $ the bound can range from 
	$ |\Lambda|/l^2 $ (the ``typical'' case) to $ |\Lambda|/l $ (when $ \Lambda $ is a very narrow strip).
	Each case can be dealt with similarly, but the choices of constants in \cref{thm:Bernstein} need
	to be adjusted. To account for these adjustments we separate the partition into sets of the same size.
	Let $ I_k $, $ 1\le k\le 4 $, denote the sets of indices corresponding to the 
	maximal subfamilies 
	of $ \{ \Lambda_i \} $ of sets with the same dimensions.
	\begin{center}
		\includegraphics{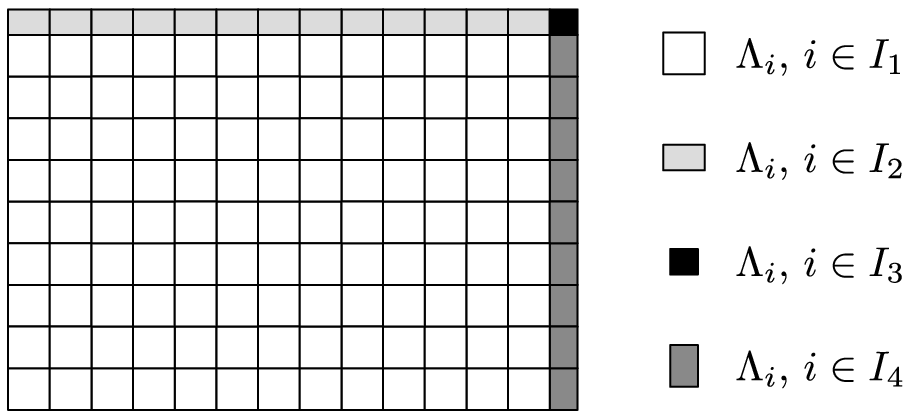}
	\end{center}
	Let $ A_k $ be the size of the sets $ \Lambda_i $ with indices in $ I_k $.
	We will apply \cref{thm:Bernstein} with 
	$ X_i=A_k^{-(1/2+\delta_0)}(\log|f_{\Lambda_i}^E|-\ex(\log|f_{\Lambda_i}^E|)) $, $ i\in 
	I_k $.
	From the hypothesis \cref{eq:moments-hypothesis} it follows that 
	\begin{equation*}
		|\ex X_i^m|
		\le \frac{1}{A_k^{m(1/2+\delta_0)}}
			\left( mC_0 A_k^{1/2+\delta_0}(1+\log A_k) \right)^m
		\le \frac{1}{2}m!\sigma_k^2T_k^{m-2},
	\end{equation*}
	with $ T_k\ge \sigma_k:=CC_0(1+\log A_k) $, $ C\gg 1 $.
	Applying \cref{thm:Bernstein}  we get 
	\begin{equation}\label{eq:Bernstein-k}
		\pr\left(\left| \sum_{i\in I_k}\left(\log|f_{\Lambda_i}^E|
			-\ex(\log|f_{\Lambda_i}^E|)\right) \right|\ge x_k A_k^{1/2+\delta_0}\right)
		\le \exp(-x_k/4T_k), 
	\end{equation}
	provided $ x_k\ge |I_k| \sigma_k^2/T_k $. In particular, a straightforward computation
	shows that \cref{eq:Bernstein-k} holds with $ x_k=4K\sigma_k\sqrt{|\Lambda|/A_k} $
	and 
	$ T_k=\sigma_k\sqrt{|\Lambda|/A_k} $, provided $ K\ge 1/4 $. Since 
	\begin{equation*}
		x_k A_k^{1/2+\delta_0}= 4K\sigma_k|\Lambda|^{1/2} A_k^{\delta_0}
		\le 4K\sigma_k|\Lambda|^{1/2}|\Lambda|^{\delta_0(1-2c_0\delta_0)}
		= 4K\sigma_k|\Lambda|^{1/2+c_0\delta_0},
	\end{equation*}
	it follows that for any $ k $ we have
	\begin{multline}\label{eq:ldt-I_k}
		\pr\left(\left| \sum_{i\in I_k}\left(\log|f_{\Lambda_i}^E|
			-\ex(\log|f_{\Lambda_i}^E|)\right) \right|
			\ge CC_0|\Lambda|^{1/2+c_0\delta_0}(1+\log|\Lambda|)K\right)\\
		\le \exp(-K). 		
	\end{multline}
	Plugging the estimates \eqref{eq:ldt-factorization}, \eqref{eq:ldt-average-factorization}, and
	\eqref{eq:ldt-I_k} into \eqref{eq:ldt-triangle} yields
	\begin{equation*}
		\left| \log|f_\Lambda|-\ex(\log|f_\Lambda|) \right|
		\le CC_0|\Lambda|^{1/2+c_0\delta_0}(1+\log|\Lambda|)K,
	\end{equation*}
	except for a set of measure less than $ 5\exp(-K)\le \exp(-K/2) $, provided $ K\gg 1 $.
	This concludes the proof.
\end{proof}

\begin{proof}(of \cref{thm:ldt})
	From \cref{lem:Cartan-estimates} (a) and \cref{lem:abstract-moment-estimate} it follows that for any
	$ m\ge 1 $ and any $ \Lambda\subset \Z_W $ we have
	\begin{equation*}
		\norm{\log|f_{\Lambda}^E|-\ex(\log|f_{\Lambda}^E|) }_m
		\le m C|\Lambda|^{1/2+\delta_0}(1+\log|\Lambda|),
	\end{equation*}
	with $ C=C(D_0,D_1,|E|) $ and $ \delta_0=1/2 $.	By applying \cref{prop:ldt-from-moments} and 
	\cref{lem:abstract-moment-estimate} $ n $ times we obtain
	\begin{equation*}
		\pr\left(|\log|f_{\Lambda}^E|-\ex(\log|f_{\Lambda}^E| ) |
			>C (C')^n|\Lambda|^{1/2+\delta_n}(1+\log|\Lambda|)K\right)
		\le \exp(-K/2),
	\end{equation*}	
	with $ K\gg 1 $, $ C'=C'(d,D_0,D_1,|E|) $, and $ \delta_n $ defined recursively by 
	\begin{equation*}
		\delta_n=\delta_{n-1}/(1+2\delta_{n-1}).	
	\end{equation*}
	The conclusion follows immediately by noticing that $ \delta_n\to 0 $ as $ n\to \infty $ (in fact we have
	$ \delta_n=1/(2n+2) $).
\end{proof}
\section{Transfer Matrices and Dirichlet Determinants}
\label{sec:tm}

In this section we prove \cref{thm:Cartan-W-negative-values}.  
 The idea for the proof of 
\cref{thm:Cartan-W-negative-values} is to exploit the fact that $ \det T_N^E=1 $. In the case $ W=1 $ it is well-known that 
\begin{equation*}
	T_N^E= \begin{bmatrix}
		f^E_{[1,N]} & -f^E_{[2,N]}\\[0.5em]
		f^E_{[1,N-1]} & -f^E_{[2,N-1]}
	\end{bmatrix}.
\end{equation*}
So it follows that 
\begin{equation*}
	1\le \norm{T_N^E}\le |f^E_{[1,N]}|+|f^E_{[1,N-1]}|+|f^E_{[2,N-1]}|+|f^E_{[2,N-1]}|.
\end{equation*}
This implies that at least one of the determinants cannot be smaller than $ 1/4 $. The conclusion of 
\cref{thm:Cartan-W-negative-values} would follow by noticing that all the determinants are roughly the same due
to \cref{lem:interlacing}.  To apply this strategy for the general case $ W\ge 1 $ we will work with the $ W $-th exterior
power of $ T_N^E $. We refer to \cite[III.5]{BL-85-Products} for a brief review of the exterior algebra.
The entries of the matrix $ \Exterior^W T_N^E $ are the $ W\times W $ minors of $ T_N^E $. Let us 
be more explicit. We use $ e_i $, $ i=1,\ldots,2W $ to denote the standard basis of $ \R^{2W} $. For any $ 
\alpha\subset \{ 1,\ldots,2W \} $ let $ e_\alpha= \Exterior_{i\in\alpha}e_i$.  The vectors $ e_\alpha 
$, with $ |\alpha|=W $ form the standard basis of $ \Exterior^W \R^{2W} $. For any decomposable vector $ 
u=u_1\wedge\ldots\wedge u_W $ we will use $ [u] $ to denote the matrix with columns $ u_1,\ldots,u_W $.
The 
space $ \Exterior^W \R^{2W} $ is equipped with an 
inner product defined through the formula
\begin{equation*}
	\langle u,v\rangle=\det([u]^t [v]), 
\end{equation*}
where $ u $ and $ v $ are decomposable vectors and $ [u]^t $ denotes the transpose of $ [u] $. 
The entries of $ \Exterior^W T_N^E $ (with respect to the standard basis) are of the form
\begin{equation*}
	\langle e_\beta, \Exterior^W T_N^E e_\alpha \rangle
	=\det([e_\beta]^t T_N^E [e_\alpha]).
\end{equation*}
For our purposes we need to figure out what is the connection between the entries of $ \Exterior^W T_N^E $
and Dirichlet determinants. It is not hard to see that $ \det([e_\beta]^t T_N^E [e_\alpha])=f_N^E $, when 
$ \alpha=\beta= \{ 1,\ldots,W \} $. For example, this is a consequence of the following lemma. This is also
known from \cite[Prop. 3.1]{CS-83-Log}, but the next result is crucial for us because it holds even for 
non-symmetric matrices.
\begin{lemma}\label{lem:tm-abstract-det-tm-det-Dir}
	If $ M_k $, $ k=1,\ldots,N $ are $ W\times W $ matrices then the 
	determinant of
	\begin{equation*}
		\begin{bmatrix}
				M_1 & - I &  &  & & & \\
				-I & M_2 & -I &  & & &\\
	 		 & \ddots & \ddots & \ddots & & &\\
				& & & \ddots & \ddots & \ddots & \\
				& & &  & -I & M_{N-1} & -I\\
				& & &  &  & -I & M_N\\
		\end{bmatrix}
	\end{equation*}
	is equal to the determinant of 
	\begin{equation*}
		\begin{bmatrix}
			I & 0
		\end{bmatrix}
		\left(
		\prod_{k=N}^{1}\begin{bmatrix}
			M_k & -I \\ I & 0
		\end{bmatrix}\right) 
		\begin{bmatrix}
			I \\ 0
		\end{bmatrix},
	\end{equation*}
	or, in other words, the determinant of the  top-left $ W\times W $ block of the transfer matrix.
\end{lemma}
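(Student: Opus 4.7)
The plan is to reduce both the block tridiagonal determinant and the top-left block of the transfer matrix product to products governed by the \emph{same} matrix three-term recursion, from which equality of determinants is immediate. The only subtlety is that the recursion involves matrix inversions, which will be dispatched by a polynomial identity argument at the end.

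Denote the block tridiagonal matrix by $A_N$ and introduce $P_1 := M_1$, $P_k := M_k - P_{k-1}^{-1}$ for $k\ge 2$. Assuming each $P_k$ is invertible, block Gaussian elimination applied top-down clears every subdiagonal $-I$: adding $P_{k-1}^{-1}$ times the $(k-1)$-st block row to the $k$-th row, the $(k,k-1)$ block becomes $-I + P_{k-1}^{-1}P_{k-1} = 0$, the $(k,k)$ block becomes $M_k - P_{k-1}^{-1} = P_k$, and the $(k,k+1)$ block remains $-I$. The resulting matrix is block upper triangular with diagonal blocks $P_1,\ldots,P_N$, giving
\begin{equation*}
\det A_N = \prod_{k=1}^{N}\det P_k.
\end{equation*}

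Next, denote the transfer matrix product by $T_N := \prod_{k=N}^{1}\begin{bmatrix} M_k & -I \\ I & 0 \end{bmatrix}$ and write it in $W\times W$ blocks $T_N^{ij}$. From $T_N = \begin{bmatrix} M_N & -I \\ I & 0 \end{bmatrix} T_{N-1}$ one reads off $T_N^{21} = T_{N-1}^{11}$, and consequently the top-left block satisfies the three-term recursion
\begin{equation*}
T_N^{11} = M_N T_{N-1}^{11} - T_{N-2}^{11}, \qquad T_0^{11} = I,\quad T_1^{11}=M_1.
\end{equation*}
Setting $Q_k := T_k^{11}\bigl(T_{k-1}^{11}\bigr)^{-1}$ converts this into $Q_k = M_k - Q_{k-1}^{-1}$ with $Q_1 = M_1$, exactly the recursion for $P_k$. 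Hence $Q_k = P_k$ for every $k$, and telescoping yields $T_N^{11} = P_N P_{N-1}\cdots P_1$. Taking determinants gives $\det T_N^{11} = \prod_k \det P_k = \det A_N$.

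The remaining obstacle is that both the elimination and the definition of $Q_k$ require invertibility of the intermediate $P_k$ (equivalently $T_k^{11}$). This is handled by a standard polynomial identity argument: both $\det A_N$ and $\det T_N^{11}$ are polynomials in the entries of $M_1,\ldots,M_N$, the identity above holds on the nonempty Zariski-open set on which all $P_k$ are invertible, and therefore extends to arbitrary $M_1,\ldots,M_N$ by continuity.
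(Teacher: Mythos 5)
Your proof is correct, and it takes a genuinely different route from the paper. The paper proceeds by induction on $N$, peeling off the last block row/column with Schur's complement formula: $\det A_{N+1}=\det M_{N+1}\det\bigl(A_N-\Gamma M_{N+1}^{-1}\Gamma^t\bigr)$, then recognizing that the Schur complement is again a block tridiagonal matrix of the same shape with $M_N$ replaced by $M_N-M_{N+1}^{-1}$, which reduces the inductive step to a short $2W\times 2W$ matrix identity relating the modified transfer factor to the original ones. You instead perform a block $LU$ elimination from the top, producing the pivot recursion $P_k=M_k-P_{k-1}^{-1}$ with $\det A_N=\prod\det P_k$, then show that the same continued-fraction recursion governs $Q_k:=T_k^{11}(T_{k-1}^{11})^{-1}$ via the three-term recurrence $T_k^{11}=M_kT_{k-1}^{11}-T_{k-2}^{11}$, so that in fact $T_N^{11}=P_N\cdots P_1$. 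Your route yields a slightly stronger conclusion (an explicit factorization of the top-left transfer block, not merely equality of determinants) and is arguably more transparent about the invertibility issue: you clearly isolate the single generic condition (all $P_k$ invertible, nonempty and Zariski-open) and appeal to a polynomial identity, whereas the paper's continuity remark ``assume the $M_k$ invertible'' is a bit loose since its induction quietly requires the modified matrices $M_N-M_{N+1}^{-1}$ etc.\ to stay generic as well --- also true by density, but you make it explicit. The paper's induction buys a shorter write-up that stays entirely within Schur's formula, which the authors reuse elsewhere.
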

\begin{proof}
	By continuity it is enough to prove the result for the case when the matrices $ M_k $ are invertible. The 	proof is by induction on $ N $. The case $ N=1 $ is trivially true. We assume the statement to be true for 
	$ N $ and we prove it for $ N+1 $. Let $ A_N $ denote the $ (NW)\times (NW) $ matrix from the statement 
	of the lemma. We can
	write 
	\begin{equation*}
		A_{N+1}= \begin{bmatrix}
			A_N & \Gamma\\
			\Gamma^t & M_{N+1}
		\end{bmatrix}
	\end{equation*}  
	with $ \Gamma $ a $ (N-1)W\times W $ matrix with the bottom $ W\times W $ block equal to $ -I $ and all the 	other entries equal to zero. By Schur's formula we have
	\begin{equation}\label{eq:Schur}
		\det A_{N+1}=\det M_{N+1} \det(A_N-\Gamma M_{N+1}^{-1}\Gamma^t). 
	\end{equation} 
	A direct computation shows that
	\begin{equation*}
		A_N-\Gamma M_{N+1}^{-1}\Gamma^t
		=\begin{bmatrix}
				M_1 & - I &  &  & & & \\
				-I & M_2 & -I &  & & &\\
	 		 	& \ddots & \ddots & \ddots & & &\\
				& & & \ddots & \ddots & \ddots & \\
				& & &  & -I & M_{N-1} & -I\\
				& & &  &  & -I & M_N-M_{N+1}^{-1}\\
		\end{bmatrix}.
	\end{equation*}
	By the induction hypothesis we have
	\begin{multline*}
		\det(A_N-\Gamma M_{N+1}^{-1}\Gamma^t)\\
		=\det \left( \begin{bmatrix}
			I & 0
		\end{bmatrix}
		\begin{bmatrix}
			M_N-M_{N+1}^{-1} & -I \\
			I & 0
		\end{bmatrix}
		\left(\prod_{k=N-1}^{1} \begin{bmatrix}
			M_k & -I \\
			I & 0
		\end{bmatrix}\right)
		\begin{bmatrix}
			I \\ 0
		\end{bmatrix}
		\right).
	\end{multline*}
	The conclusion follows from the above and \cref{eq:Schur} by noticing that
	\begin{equation*}
		M_{N+1}\begin{bmatrix}
			I & 0
		\end{bmatrix}
		\begin{bmatrix}
			M_N-M_{N+1}^{-1} & -I \\
			I & 0
		\end{bmatrix}
		=
		\begin{bmatrix}
			I & 0
		\end{bmatrix}
		\begin{bmatrix}
			M_{N+1} & -I\\
			I & 0
		\end{bmatrix}
		\begin{bmatrix}
			M_{N} & -I\\
			I & 0
		\end{bmatrix}.
	\end{equation*}
\end{proof} 

Not all of the entries of $ \Exterior^W T_N^E $ are determinants of self-adjoint matrices (see 
\cite[Prop. 3.1]{CS-83-Log} and the ensuing comments). Furthermore, it is not clear wether all the entries
are related to eigenvalue problems. To better understand this let us
discuss the eigenvalue problems associated with $ \det([v]^t T_N^E [u]) $, where $ u,v $ are non-trivial 
decomposable vectors in $ \Exterior^W \R^{2W} $ (of course, our immediate interest is in the case when $ u,v $ are vectors from the standard basis). 
We follow \cite[III.5.1]{CL-90-Spectral}. Let $ u,v $ be two non-trivial decomposable vectors in 
$ \Exterior^W \R^{2W} $. We have that $ \det([v]^t T_N^E [u])=0 $ if and only if there exists a non-trivial 
vector $ \Phi\in \R^{W} $ such that $ [v]^t T_N^E [u]\Phi =0 $.
Starting with $ \Phi $ we can use the transfer matrix to build a solution $ \Psi\in l^2([0,N+1],\C^W) $ of
\begin{equation*}
	(H\Psi)_i=E\Psi_i,\,i\in[1,N].
\end{equation*}
The solution $ \Psi $ is defined by
\begin{equation*}
	\begin{bmatrix}
		\Psi_{k+1} \\ \Psi_k
	\end{bmatrix}=T_k^E [u]\Phi,\quad k=0,\ldots,N,
\end{equation*}
where we let $ T_0^E $ be the identity matrix. Clearly the solution satisfies the boundary conditions
\begin{equation}\label{eq:boundary-conditions}
	\begin{bmatrix}
		\Psi_1 \\ \Psi_0
	\end{bmatrix} \in \Ran([u]),\qquad
	\begin{bmatrix}
		\Psi_{N+1} \\ \Psi_N
	\end{bmatrix} \in \Ker([v]^t).
\end{equation}
Given a decomposable vector $ w $ we will  use $ A_w $ and $ B_w $ to denote the top and bottom $ W\times W $ blocks of $ [w] $, so we have
\begin{equation*}
	[w]= \begin{bmatrix}
		A_w\\ B_w
	\end{bmatrix}.
\end{equation*}
Assuming that $ A_u $ and $ A_v $ are invertible it follows that the boundary conditions \cref{eq:boundary-conditions} are equivalent to
\begin{equation*}
	\Psi_0=B_uA_u^{-1}\Psi_1,\qquad \Psi_{N+1}=-(B_vA_v^{-1})^t\Psi_N.
\end{equation*}
Based on this we define the following operator on $ l^2([1,N],\C^W) $:
\begin{equation*}
	(H_N(u,v)\Psi)_i
	=
	\begin{cases}
		-\Psi_2+(S_1-B_uA_u^{-1})\Psi_1 & ,i=1\\
		(H\Psi)_i & ,i\in[2,N-1]\\
		-\Psi_{N-1}+(S_N+(B_vA_v^{-1})^t)\Psi_N & ,i=N		
	\end{cases}.
\end{equation*}
Let $ f_N^E(u,v)=\det(H_N(u,v)-E) $. From the construction of $ H_N(u,v) $ it follows that $ \det([v]^t T_N^E [u])=0 $ if and only if $ E $ is an eigenvalue for $ H_N(u,v) $. Hence, it is not surprising that $ \det([v]^t T_N^E [u]) $ and $ f_N^E(u,v) $ are the same up to a multiplicative constant. 
More precisely we have the following result. 
\begin{proposition}
	Let $ E\in \C $ and let $ u $ and $ v $ be decomposable vectors such that $ A_u $ and $ A_v $ are 
	invertible. Then we have 
	\begin{equation*}
		\det(A_uA_v) f_N^E(u,v)=\det([v]^t T_N^E [u]).
	\end{equation*}
\end{proposition}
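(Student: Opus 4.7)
The plan is to apply \cref{lem:tm-abstract-det-tm-det-Dir} to $H_N(u,v)-E$ and then recognize the resulting top-left $W\times W$ block as a minor of $T_N^E$ after stripping off two triangular factors associated with the boundary modifications.

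First, observe that $H_N(u,v)-E$ is exactly a block tridiagonal matrix of the type considered in \cref{lem:tm-abstract-det-tm-det-Dir}, with
\begin{equation*}
  M_1 = S_1 - B_u A_u^{-1} - E,\qquad M_k = S_k - E \text{ for } 2\le k\le N-1,\qquad M_N = S_N + (B_v A_v^{-1})^t - E.
\end{equation*}
Thus the lemma gives $f_N^E(u,v) = \det\bigl([I\ 0]\,P\,[I;\,0]\bigr)$, where $P$ is the product $\prod_{k=N}^{1}\begin{bmatrix}M_k & -I\\ I & 0\end{bmatrix}$.

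Next, I would factor out the boundary perturbations from the first and last factors. A direct computation shows
\begin{equation*}
  \begin{bmatrix}S_1-E-B_uA_u^{-1} & -I\\ I & 0\end{bmatrix}
  = \begin{bmatrix}S_1-E & -I\\ I & 0\end{bmatrix}\begin{bmatrix}I & 0\\ B_uA_u^{-1} & I\end{bmatrix},
\end{equation*}
and similarly
\begin{equation*}
  \begin{bmatrix}S_N-E+(B_vA_v^{-1})^t & -I\\ I & 0\end{bmatrix}
  = \begin{bmatrix}I & (B_vA_v^{-1})^t\\ 0 & I\end{bmatrix}\begin{bmatrix}S_N-E & -I\\ I & 0\end{bmatrix}.
\end{equation*}
All the middle factors are the ordinary transfer matrix factors, so the product telescopes to
\begin{equation*}
  P = \begin{bmatrix}I & (B_vA_v^{-1})^t\\ 0 & I\end{bmatrix} T_N^E \begin{bmatrix}I & 0\\ B_uA_u^{-1} & I\end{bmatrix}.
\end{equation*}

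Finally, sandwiching with $[I\ 0]$ on the left and $[I;\,0]$ on the right I get
\begin{equation*}
  f_N^E(u,v) = \det\bigl(\,[I\ (B_vA_v^{-1})^t]\; T_N^E\; [I;\,B_uA_u^{-1}]\,\bigr).
\end{equation*}
The identifications $[u] = [I;\,B_uA_u^{-1}]\,A_u$ and $[v]^t = A_v^t\,[I\ (B_vA_v^{-1})^t]$ (each of which is immediate from the block form $[u]=[A_u;B_u]$, $[v]=[A_v;B_v]$ once $A_u,A_v$ are invertible) then yield
\begin{equation*}
  \det([v]^t T_N^E [u]) = \det(A_v^t)\det(A_u)\,f_N^E(u,v) = \det(A_uA_v)\,f_N^E(u,v),
\end{equation*}
which is the desired identity. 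The only nontrivial step is the triangular factorization of the two boundary-perturbed transfer matrix factors; everything else is bookkeeping based on \cref{lem:tm-abstract-det-tm-det-Dir}.
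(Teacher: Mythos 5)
Your proof is correct and follows essentially the same path as the paper's: both rest on the identifications $[u]=[I;\,B_uA_u^{-1}]A_u$ and $[v]^t=A_v^t[I\ \ (B_vA_v^{-1})^t]$, absorbing the boundary perturbations into the end factors of the transfer-matrix product, and invoking \cref{lem:tm-abstract-det-tm-det-Dir}. The only cosmetic difference is that you run the chain of equalities in the opposite direction (starting from $f_N^E(u,v)$ rather than from $\det([v]^tT_N^E[u])$) and you spell out the triangular factorizations that the paper leaves as a ``direct computation.''
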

\begin{proof}
	Note that
	\begin{multline*}
		[v]^t T_N^E [u]
		\\=
		A_v^t \begin{bmatrix}
			I & (B_vA_v^{-1})^t
		\end{bmatrix}
		\begin{bmatrix}
			S_N-E & -I \\ I & 0
		\end{bmatrix}  
		\left(\prod_{k=N-1}^{2}\begin{bmatrix}
			S_k-E & -I\\
			I & 0
		\end{bmatrix}\right)\\
		\cdot
		\begin{bmatrix}
			S_1-E & -I \\ I & 0
		\end{bmatrix}
		\begin{bmatrix}
			I \\ B_uA_u^{-1}
		\end{bmatrix} A_u\\
		\\=
		A_v^t
		\begin{bmatrix}
			I & 0
		\end{bmatrix} 
		\begin{bmatrix}
			S_N+(B_vA_v^{-1})^t-E & -I\\
			I & 0
		\end{bmatrix}
		\left(\prod_{k=N-1}^{2}\begin{bmatrix}
			S_k-E & -I\\
			I & 0
		\end{bmatrix}\right)\\
		\cdot\begin{bmatrix}
			S_1-B_uA_u^{-1}-E & -I\\
			I & 0
		\end{bmatrix}
		\begin{bmatrix}
			I \\ 0
		\end{bmatrix} 
		A_u.
	\end{multline*}
	The conclusion now follows from \cref{lem:tm-abstract-det-tm-det-Dir}.
\end{proof}

When $ u,v $ are part of the standard basis we don't have, in general, that $ A_u,A_v $ are invertible. So it is not clear what eigenvalue problems can be associated with $ \det([e_\beta]^tT_N^E [e_\alpha]) $. To make use of the above discussion we will work with a different basis of $ \Exterior^W\R^{2W} $. Clearly, we need to have control 
on the norms of $ A_u^{-1} $ and $ B_u $ for all $ u $ in the new basis. We deal with these issues in the 
following.
\begin{lemma}\label{lem:basis}
	There exists a basis $ \{ u_\alpha \} $ of $ \Exterior^W\R^{2W} $ with $ A_{u_\alpha}=I $ and 
	$ \norm{B_{u_\alpha}}\le 1 $ for all $ \alpha $, such that the vectors from the standard basis can be 
	written as
	linear combinations of $ \{ u_\alpha \} $  with coefficients having absolute value lesser or equal to 
	$ 1 $.
\end{lemma}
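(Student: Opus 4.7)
The plan is to exhibit the basis explicitly, indexed by the $W$-subsets $\alpha \subseteq \{1,\ldots,2W\}$. Write $\alpha_t = \alpha \cap \{1,\ldots,W\}$ and $\alpha_b = (\alpha \cap \{W+1,\ldots,2W\}) - W$. Since $|\alpha_t| + |\alpha_b| = W$, the sets $\{1,\ldots,W\}\setminus\alpha_t$ and $\alpha_b$ have the same cardinality, so there is a unique order-preserving bijection $\pi_\alpha : \{1,\ldots,W\}\setminus\alpha_t \to \alpha_b$. Put $v^\alpha_j := e_j + e_{W+\pi_\alpha(j)}$ for $j \notin \alpha_t$ and $v^\alpha_j := e_j$ for $j \in \alpha_t$, and set $u_\alpha := v^\alpha_1 \wedge \cdots \wedge v^\alpha_W$. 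By construction $A_{u_\alpha} = I$, while $B_{u_\alpha}$ is the $0$-$1$ matrix representing the graph of $\pi_\alpha$; this is a partial permutation matrix, so $\norm{B_{u_\alpha}} \le 1$.

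Expanding the wedge, $u_\alpha = \sum_\gamma T_{\gamma\alpha}\, e_\gamma$, where a direct computation shows $T_{\gamma\alpha}$ vanishes unless $\alpha_t \subseteq \gamma_t$ and $\pi_\alpha$ restricts to $\pi_\gamma$ on $\{1,\ldots,W\}\setminus\gamma_t$, and in the nonvanishing case equals a sign $\pm 1$ coming from reordering the wedge. Ordering indices so that $|\alpha_t|$ decreases makes $T$ upper triangular with $\pm 1$ on the diagonal, hence invertible; this shows $\{u_\alpha\}$ is indeed a basis of $\Exterior^W \R^{2W}$.

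The remaining task, and the main obstacle, is to show $|T^{-1}_{\alpha\gamma}| \le 1$. Identifying each $\alpha$ with its order-preserving partial permutation $\pi_\alpha$ on $\{1,\ldots,W\}$, the nonzero pattern of $T$ is exactly the restriction partial order $\pi_\alpha \preceq \pi_\gamma$ (meaning $\pi_\alpha$ is a restriction of $\pi_\gamma$). For fixed $\gamma$, the order ideal $\{\pi_\alpha : \pi_\alpha \preceq \pi_\gamma\}$ is isomorphic as a poset to the Boolean lattice $2^{\mathrm{dom}(\pi_\gamma)}$, on which Möbius inversion returns values in $\{-1,0,1\}$. I would obtain the desired bound either directly, by matching $T^{-1}$ with this Möbius matrix once the wedge-product sign convention has been accounted for, or by induction on $|\gamma_b|$: the base case $|\gamma_b|=0$ gives $e_\gamma = u_\gamma$, and in the inductive step the relation $u_\gamma = \pm e_\gamma + \sum_{\gamma' \ne \gamma} T_{\gamma'\gamma}\, e_{\gamma'}$ (the sum being over $\gamma'$ with $\pi_{\gamma'}$ a proper restriction of $\pi_\gamma$, so $|\gamma'_b|<|\gamma_b|$) isolates $e_\gamma$ and lets one substitute the inductive expressions.

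The hard part is precisely this sign bookkeeping: each iterated expansion spawns exponentially many terms, and one must verify that after all Boolean-lattice cancellations no coefficient of $e_\gamma$ in the $u$-basis exceeds $1$ in modulus. Small-$W$ computation suggests the stronger statement that the coefficients lie in $\{-1,0,1\}$, which is consistent with the Möbius-function interpretation, so no slack is lost; but carrying the wedge-product signs cleanly through the recursion (or, equivalently, matching them against the canonical Möbius values) is where the combinatorial work sits.
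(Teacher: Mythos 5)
Your construction of the basis is the same as the paper's, and you are right to fix $\pi_\alpha$ to be the \emph{order-preserving} bijection: the paper leaves the choice of $\phi_\alpha$ unspecified, but the argument actually needs a consistent choice so that the restriction of $\phi_\gamma$ to a subset of its domain agrees with the $\phi_\beta$ attached to the corresponding smaller index set. On this point your write-up is a bit more careful than the paper. The genuine gap is in the coefficient bound, which you explicitly flag as the ``hard part'' and leave unfinished; the M\"obius-inversion and sign-tracking programme you sketch is not carried out, so the proposal does not actually prove the lemma.

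The missing idea is that one never needs to invert anything: the paper computes $T^{-1}$ directly by a one-line distributive expansion. Order the elements of $\gamma$ so that $\gamma\cap[1,W]=\{\gamma_1,\ldots,\gamma_k\}$ and $\gamma\cap[W+1,2W]=\{\gamma_{k+1},\ldots,\gamma_W\}$, and for $j>k$ write
\begin{equation*}
  e_{\gamma_j}=\bigl(e_{\phi_\gamma^{-1}(\gamma_j)}+e_{\gamma_j}\bigr)-e_{\phi_\gamma^{-1}(\gamma_j)}.
\end{equation*}
Substituting this into $e_\gamma=e_{\gamma_1}\wedge\cdots\wedge e_{\gamma_W}$ and expanding the product of the $W-k$ binomials yields $2^{W-k}$ terms, each with coefficient $\pm 1$. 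After reordering each term by its first-component index (which runs exactly over $[1,W]$), the term equals $\pm u_\beta$, where $\beta$ is determined by the subset $S\subseteq\{k+1,\ldots,W\}$ of factors from which the ``binomial'' $e_{\phi_\gamma^{-1}(\gamma_j)}+e_{\gamma_j}$ was chosen: it really is $u_\beta$ (and not merely some decomposable vector of the same shape) precisely because restrictions of order-preserving maps are order-preserving, so $\phi_\gamma$ restricted to $[1,W]\setminus\beta$ is $\phi_\beta$. Distinct $S$ give distinct $\beta$ (their parts in $[W+1,2W]$ differ), so each $u_\beta$ appears at most once and no cancellations occur. Hence the coefficients of $e_\gamma$ in the $\{u_\alpha\}$-basis already lie in $\{-1,0,1\}$, and the bound you were stuck on is immediate. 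In short, you set the problem up as estimating $T^{-1}$ from a triangular $T$, whereas the lemma is proved by writing down $T^{-1}$ explicitly; the small-$W$ evidence you gathered (entries in $\{-1,0,1\}$) is exactly what this direct expansion produces.
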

\begin{proof}
	We build the basis explicitly. For any $ \alpha\subset [1,2W] $, with $ |\alpha|=W $, we let 
	$ u_\alpha=\Exterior_{i\in[1,W]}u_{\alpha,i} $, where
	\begin{equation*}
		u_{\alpha,i}=\begin{cases}
			e_i &, i\in\alpha\\
			e_i+e_{\phi_\alpha(i)} &, i\notin \alpha
		\end{cases}
	\end{equation*}
	and $ \phi_\alpha $ is a bijection from $ [1,W]\setminus \alpha $ to $ \alpha\cap [W+1,2W] $.
	We clearly have that $ A_{u_\alpha}=I $ and $ \norm{B_{u_\alpha}}\le 1 $ ($ B_{u_\alpha} $ is the matrix 
	with the $ i $-th column equal to zero if $ i\in\alpha $ and equal to $ d_{\phi_\alpha(i)-W} $ if 
	$ i\notin \alpha $, where $ \{ d_i \}$ is the standard basis of $ \R^W $). 
	
	Now we just have to check that $ \{ u_\alpha \} $ generates the standard basis, with the stated bounds on 
	the coefficients. Fix $ \alpha\subset [1,2W] $, with $ |\alpha|=W $. Let the elements of $ \alpha $ be 
	$ \alpha_1\le \ldots\le \alpha_W $ and let $ k $ be such that 
	$ \alpha\cap[1,W]= \{ \alpha_1,\ldots,\alpha_k \} $. The conclusion follows by writing 
	\begin{multline*}
		e_\alpha=e_{\alpha_1}\wedge\ldots\wedge e_{\alpha_k}\\
			\wedge \left[
				\left(e_{\alpha_{k+1}}+e_{\phi_\alpha^{-1}(\alpha_{k+1})}\right)
					-e_{\phi_\alpha^{-1}(\alpha_{k+1}
			)}\right]\wedge \ldots
			\wedge \left[
				\left(e_{\alpha_{W}}+e_{\phi_\alpha^{-1}(\alpha_{W})}\right)
					-e_{\phi_\alpha^{-1}(\alpha_{W}
			)}\right] 
	\end{multline*}	  
	and expanding the square brackets.
\end{proof}

From the previous lemma it follows that for any $ \alpha_0,\beta_0 $ we have
\begin{equation*}
	|\langle e_{\beta_0},\Exterior^W T_N^E e_{\alpha_0} \rangle|
	\le \sum |\langle u_\alpha,\Exterior^W T_N^E u_\beta \rangle |
\end{equation*}
and consequently
\begin{equation}\label{eq:norm-minors-bound}
	\norm{\Exterior^W T_N^E}\le \exp(CW) \sum |\det([u_\beta]^t T_N^E [u_\alpha])|,
\end{equation}
with $ C $ an absolute constant. We would now like to argue that the determinants on the right-hand side are 
roughly the same as $ f_N^E $. This only works for the determinants corresponding to symmetric matrices. 
However, we will only need the following weaker estimate that also holds for the determinants of 
non-symmetric matrices.

\begin{lemma}\label{lem:Dirichlet-vs-minors}
	Let $ E\in \R $ and let $ \{ u_\alpha \} $ be the basis from \cref{lem:basis}. Then there exists a constant 
	$ C_0=C_0(D_0,D_1,|E|) $ such that
	\begin{equation*}
		\pr \left(\exists\, \alpha, \beta,\, \log|f_N^E(u_\alpha,u_\beta)|-\log|f_N^E|>8KW \right)
		\le \exp(-K/4),
	\end{equation*}
	for any $ K\ge C_0(1+\log(NW)) $.
\end{lemma}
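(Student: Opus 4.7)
The starting point is that, with $A_{u_\alpha} = A_{u_\beta} = I$ from \cref{lem:basis}, the operator $H_N(u_\alpha, u_\beta)$ differs from the Dirichlet restriction $H_N$ only through boundary corrections: a $-B_{u_\alpha}$ added to the $(1,1)$ block and a $B_{u_\beta}^{t}$ added to the $(N,N)$ block. Since $\|B_{u_\alpha}\|, \|B_{u_\beta}\| \le 1$, the difference $R := H_N(u_\alpha,u_\beta) - H_N$ has rank at most $2W$ and admits a factorization $R = X Y^{t}$ with $X, Y$ of size $NW \times 2W$ whose nonzero rows are supported at the first and last blocks and whose operator norms are at most $1$.

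Because $H_N(u_\alpha, u_\beta)$ is in general non-self-adjoint, the Weyl interlacing estimate of \cref{lem:interlacing} is not directly available. Instead, the plan is to use the matrix determinant lemma,
\begin{equation*}
\det\bigl(H_N(u_\alpha,u_\beta) - E\bigr) \;=\; \det(H_N - E)\,\det\bigl(I_{2W} + Y^{t}(H_N - E)^{-1}X\bigr),
\end{equation*}
and then bound the $2W\times 2W$ determinant by $(1 + \|Y^{t}(H_N-E)^{-1}X\|)^{2W} \le (1 + 1/\dist(E,\spec H_N))^{2W}$, using that $H_N$ is self-adjoint so the resolvent norm is $1/\dist(E,\spec H_N)$. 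This yields the \emph{uniform-in-$(\alpha,\beta)$} estimate
\begin{equation*}
\log|f_N^E(u_\alpha,u_\beta)| - \log|f_N^E| \;\le\; 2W\log\!\left(1 + \frac{1}{\dist(E,\spec H_N)}\right),
\end{equation*}
so the existential quantifier over the finitely many pairs $(\alpha,\beta)$ in the event costs nothing.

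It remains to note that the right-hand side exceeds $8KW$ only on the event $\dist(E,\spec H_N) < 1/(e^{4K}-1)$, whose probability is controlled by Wegner's estimate (exactly as in the proof of \cref{lem:Cartan-estimates}(a)) by $\lesssim D_0 NW\, e^{-4K}$. Choosing $C_0 = C_0(D_0,|E|)$ large enough so that $K \ge C_0(1 + \log(NW))$ makes this bound at most $\exp(-K/4)$, completing the proof. The only conceptual hurdle is the loss of self-adjointness of $H_N(u_\alpha,u_\beta)$, which is why one passes through the matrix determinant lemma instead of invoking \cref{lem:interlacing} directly; everything else — the factorization of $R$, the norm bounds from \cref{lem:basis}, and the Wegner tail — is routine.
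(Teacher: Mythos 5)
Your proof is correct, but it takes a genuinely different route from the paper's. Both arguments face the same obstacle — $H_N(u_\alpha,u_\beta)$ is not self-adjoint, so \cref{lem:interlacing} cannot be applied to it directly — but they circumvent it differently. The paper passes to the self-adjoint products $\tilde H_1=(H_N(u_\alpha,u_\beta)-E)^t(H_N(u_\alpha,u_\beta)-E)$ and $\tilde H_2=(H_N-E)^t(H_N-E)$, checks $\rank(\tilde H_1-\tilde H_2)\le 4W$, and then invokes \cref{lem:interlacing} at energy $0$; as a consequence it must control both $\log^+\norm{\tilde H_1}$ (which requires the tail bounds on $\max_i|V_i|$ and $\max_n\norm{U_n}$ from \cref{eq:intro-densitydecay}, i.e.\ the $D_1$ hypothesis) and $\log^-\dist(0,\spec\tilde H_2)$ (via Wegner). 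You instead factor the rank-$\le 2W$ boundary perturbation $R=H_N(u_\alpha,u_\beta)-H_N=XY^t$ with $\norm{X},\norm{Y}\le 1$ — possible because $R$ is block diagonal with the only nonzero blocks $-B_{u_\alpha}$ at $(1,1)$ and $B_{u_\beta}^t$ at $(N,N)$, both of norm $\le 1$ by \cref{lem:basis}, and one can split the singular values between $X$ and $Y$ — and apply the matrix determinant lemma together with $|\det(I_{2W}+M)|\le(1+\norm{M})^{2W}$. This yields the uniform-in-$(\alpha,\beta)$ bound
\begin{equation*}
\log|f_N^E(u_\alpha,u_\beta)|-\log|f_N^E|\le 2W\log\left(1+\frac{1}{\dist(E,\spec H_N)}\right),
\end{equation*}
after which only Wegner is needed. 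Your version is a bit cleaner and slightly more economical: it avoids the squaring trick, uses a smaller rank ($2W$ instead of $4W$), needs no bound on $\norm{\tilde H_1}$, and hence does not use the $D_1$ integrability assumption at all in this lemma (so in fact $C_0$ here could be taken to depend only on $D_0$). The one-sidedness of the estimate is still essential for the same reason as in the paper — there is no Wegner estimate for the non-self-adjoint $H_N(u_\alpha,u_\beta)$ to run the argument in reverse.
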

\begin{proof}
	Let 
	\begin{equation*}
		\tilde H_1=(H_N(u_\alpha,u_\beta)-E)^t(H_N(u_\alpha,u_\beta)-E)	
	\end{equation*}
	and
	\begin{equation*}
		\tilde H_2=(H_N-E)^t(H_N-E).
	\end{equation*}
	By a direct computation of $ \tilde H_1-\tilde H_2 $ one sees that $ \rank(\tilde H_1-\tilde H_2)\le 4W $.
	Applying \cref{lem:interlacing} we obtain
	\begin{multline*}
		2\log|f_N^E(u_\alpha,u_\beta)|-2\log|f_N^E|=\log|\det \tilde H_1|-\log|\det \tilde H_2|\\
		\le 16W \max(\log^+ \norm{\tilde H_1},\log^-\dist(0,\spec \tilde H_2)).
	\end{multline*}
	Next we note that
	\begin{multline*}
		\norm{\tilde H_1}\le \norm{H_N(u_\alpha,u_\beta)-E}^2\\
		\le \left(2+\max_i |V_i|+\max_n \norm{U_n}+|E|+
			\norm{B_{u_\alpha} A_{u_\alpha}^{-1}}+\norm{B_{u_\beta} A_{u_\beta}^{-1}}\right)^2\\
		\le (\max_i |V_i|+\max_n \norm{U_n}+|E|+4)^2
	\end{multline*}
	and
	\begin{equation*}
		\dist(0,\spec \tilde H_2)=(\dist(E,\spec H_N))^2.
	\end{equation*}
	From the above relations we can now conclude (as in the proof of \cref{lem:Cartan-estimates}) that
	\begin{multline*}
		\pr \left(\exists\, \alpha, \beta,\, \log|f_N^E(u_\alpha,u_\beta)|-\log|f_N^E|>8KW \right)\\
		\le \pr(\log \norm{\tilde H_1}>K)+\pr(\log(\dist(0,\spec \tilde H_2))<-K)\\
		\le \pr(\max_i |V_i|>\exp(-K/2)/3+\pr(\max_n \norm{U_n}>\exp(-K/2)/3)\\
		+\pr(\dist(E,\spec H_N)<\exp(-K/2))
		\le \exp(-K/4),
	\end{multline*}
	provided $ K\ge C(1+\log(NW)) $, with $ C=C(D_0,D_1,|E|) $.
	
	Note that it is not possible to repeat the argument by switching $ \tilde H_1 $ with $\tilde  H_2 $ because
	we don't have a Wegner estimate for the non-symmetric matrix $ \tilde H_2 $.
\end{proof}

We have all we need to prove \cref{thm:Cartan-W-negative-values}. 
\begin{proof}(of \cref{thm:Cartan-W-negative-values})
	We have that $ \norm{\Exterior^W T_N^E}^{\binom{2W}{W}}\ge |\det \Exterior^W T_N^E|=1 $. The last identity 
	follows from the Sylvester-Franke Theorem (see, for example, \cite{Tor-52-Sylvester}). Using 
	\cref{eq:norm-minors-bound} it follows that $ \max_{\alpha,\beta} |f_N^E(u_\alpha,u_\beta)| \ge 
	\exp(-CW) $, with some absolute constant $ C $. From \cref{lem:Dirichlet-vs-minors} we can conclude that
	\begin{equation*}
		\log|f_N^E|\ge \max_{\alpha,\beta} \log|f_N^E(u_\alpha,u_\beta)|-8KW\ge -10KW,
	\end{equation*}
	except for a set of measure less than $ \exp(-K/4) $, for any $ K\ge C(1+\log(NW)) $ with 
	$ C=C(D_0,D_1,|E|) $ large enough.
\end{proof}

\section{Proof of the Main Result}\label{sec:lower-bound-expectation}

In this section we prove \cref{thm:lower-bound-sum-Lyapunov}. As was mentioned in the introduction, we will
first estimate the rate of convergence in \cref{eq:determinant-to-Lyapunov} and provide a lower bound for
$ \ex(\log|f_N^E|) $.

We will use the following lemma to estimate the rate of convergence in \cref{eq:determinant-to-Lyapunov}.
\begin{lemma}\label{lem:multiscale-comparison-ex-f_N}
	Let $ E\in \R $. There exists a constant $ C_0=C_0(D_0,D_1,|E|) $ such that
	\begin{equation*}
		\left| \frac{\ex(\log|f_{N_1}^E|)}{N_1}-\frac{\ex(\log|f_{N_2}^E|)}{N_2} \right|
		\le C_0\frac{W\log\left(N_1 W \right)}{N_2},
	\end{equation*}
	for any $ N_1\ge N_2^2> 1 $.
\end{lemma}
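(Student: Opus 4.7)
The plan is to obtain the estimate via a standard subadditivity-type argument, slicing the long strip $[1,N_1]\times[1,W]$ into blocks of length $N_2$, using the near-multiplicativity of the Dirichlet determinant supplied by \cref{lem:Cartan-estimates}~(b), and then exploiting that the blocks are i.i.d.\ copies of $H_{N_2}$.

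Concretely, let $q=\lfloor N_1/N_2\rfloor$ and $r=N_1-qN_2<N_2$. I would partition $\Lambda=[1,N_1]\times[1,W]$ as $\Lambda_1\cup\cdots\cup\Lambda_q\cup\Lambda_{q+1}$, where $\Lambda_i=[(i-1)N_2+1,\,iN_2]\times[1,W]$ for $i\le q$ and $\Lambda_{q+1}=[qN_2+1,N_1]\times[1,W]$ (omitted if $r=0$). Because the partition is by vertical cuts only, the ``interface'' set has size
\begin{equation*}
    |\cup_i\partial_\Lambda\Lambda_i|\le (q+1)W\lesssim qW.
\end{equation*}
Each $\Lambda_i$ with $i\le q$ is a rectangle of size $N_2\times W$, and by the translation invariance of the distribution of the potentials $(V_i,U_n)$ the operator $H_{\Lambda_i}$ has the same law as $H_{N_2}$, so $\ex\log|f_{\Lambda_i}^E|=\ex\log|f_{N_2}^E|$.

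Now I would apply \cref{lem:Cartan-estimates}~(b) together with \cref{lem:abstract-moment-estimate} to the random variable $X:=\bigl|\log|f_\Lambda^E|-\sum_i\log|f_{\Lambda_i}^E|\bigr|$: the hypothesis $\pr(X>CqWK)\le\exp(-K/4)$ for $K\ge C_0(1+\log(N_1W))$ gives, after taking expectations,
\begin{equation*}
    \Bigl|\ex\log|f_\Lambda^E|-q\,\ex\log|f_{N_2}^E|-\ex\log|f_{\Lambda_{q+1}}^E|\Bigr|\lesssim qW\bigl(1+\log(N_1W)\bigr).
\end{equation*}
For the remainder term I use \cref{lem:Cartan-estimates}~(a) plus \cref{lem:abstract-moment-estimate}:
\begin{equation*}
    \bigl|\ex\log|f_{\Lambda_{q+1}}^E|\bigr|\lesssim rW\bigl(1+\log(rW)\bigr)\lesssim N_2W\bigl(1+\log(N_1W)\bigr),
\end{equation*}
with the same constants. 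Combining these two bounds yields
\begin{equation*}
    \Bigl|\ex\log|f_{N_1}^E|-q\,\ex\log|f_{N_2}^E|\Bigr|\lesssim (q+N_2)\,W\bigl(1+\log(N_1W)\bigr).
\end{equation*}

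To finish, I divide by $N_1$ and write $q/N_1=1/N_2-r/(N_1N_2)$ so that $|q/N_1-1/N_2|\le 1/N_1$; using once more the crude bound $|\ex\log|f_{N_2}^E||\lesssim N_2W(1+\log(N_2W))$ from \cref{lem:Cartan-estimates}~(a), the ``replacement error'' $|q/N_1-1/N_2|\cdot|\ex\log|f_{N_2}^E||$ contributes $\lesssim N_2W\log(N_1W)/N_1$. All error terms are therefore of the form $W\log(N_1W)$ times either $q/N_1\le 1/N_2$ or $N_2/N_1\le 1/N_2$ (this is where the hypothesis $N_1\ge N_2^2$ is used), and the claimed bound follows.

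There is no real obstacle here: the argument is routine once \cref{lem:Cartan-estimates} is in hand, and the only place care is needed is in converting the tail estimates of \cref{lem:Cartan-estimates} into the $L^1$ bounds on the difference and on the remainder block via \cref{lem:abstract-moment-estimate}, and in keeping track of the two distinct ``error buckets'' (the interface error, of order $qW\log(N_1W)$, and the remainder-block error, of order $N_2W\log(N_1W)$), both of which become $W\log(N_1W)/N_2$ after dividing by $N_1$ thanks to $N_1\ge N_2^2$.
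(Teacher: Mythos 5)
Your proof is correct and follows essentially the same route as the paper's: slice $[1,N_1]$ into $\lfloor N_1/N_2\rfloor$ full blocks plus a remainder, control the interface error with \cref{lem:Cartan-estimates}~(b) and \cref{lem:abstract-moment-estimate}, control the remainder block and the $q/N_1$-versus-$1/N_2$ replacement error via \cref{lem:Cartan-estimates}~(a), and invoke $N_1\ge N_2^2$ to fold all the errors into $W\log(N_1W)/N_2$. The bookkeeping of the two error buckets matches the paper's computation exactly, up to notation.
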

\begin{proof}
	Let $ n=[N_1/N_2] $ and
	\begin{equation*}
		\Lambda_i=\left( [1,N_1]\cap [iN_2+1,(i+1)N_2] \right)\times [1,W], i=0,\ldots,n.
	\end{equation*}
	Applying  \cref{lem:Cartan-estimates} (b) and \cref{lem:abstract-moment-estimate} we obtain
	\begin{multline*}
		\left|\ex(\log|f_{N_1}^E|)-\sum_{i=1}^{n} \ex \left( \log|f_{\Lambda_i}^E| \right)\right|
		=\left|\ex(\log|f_{N_1}^E|)-n \ex \left( \log|f_{N_2}^E| \right)
			-\ex \left( \log|f_{\Lambda_n}^E| \right)\right|\\
		\le CnW\log(N_1W).
	\end{multline*} 
	It follows that
	\begin{equation*}
		\left| \frac{\ex(\log|f_{N_1}^E|)}{N_1}-\frac{\ex(\log|f_{N_2}^E|)}{N_2} \right|
		\le C\frac{W\log\left( N_1 W \right)}{N_2}+\frac{\ex(|\log|f_{N_2}^E||)}{N_1}
			+\frac{\ex(|\log|f_{\Lambda_{n}}^E||)}{N_1}.		
	\end{equation*}
	By \cref{lem:Cartan-estimates} (a), \cref{lem:abstract-moment-estimate}, and the assumption that 
	$ N_1\ge N_2^2 $ we have
	\begin{multline*}
		\frac{\ex(|\log|f_{N_2}^E||)}{N_1}+\frac{\ex(|\log|f_{\Lambda_{n}}^E||)}{N_1}
		\le \frac{CN_2W\log(N_2W)}{N_1}+\frac{CN_2W\log(N_2W)}{N_1}\\
		\lesssim C\frac{W\log(N_1W)}{N_2}.
	\end{multline*}
	The conclusion follows immediately.
\end{proof} 

\begin{proposition}\label{prop:rate-of_convergence}
	Let $ E\in \R $. There exists a constant $ C_0=C_0(D_0,D_1,|E|) $ such that for any $ N>1 $ we have
	\begin{equation*}
		\left|\gamma_1^E+\ldots+\gamma_W^E-\frac{\ex(\log|f_N^E|)}{N}\right|\le C_0\frac{W\log(NW)}{N}.
	\end{equation*}
\end{proposition}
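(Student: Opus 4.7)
The plan is to iterate \cref{lem:multiscale-comparison-ex-f_N} along a doubly exponential sequence of scales, then pass to the limit using \cref{eq:determinant-to-Lyapunov}. Fix $N>1$ and set $N_0:=N$, $N_{k+1}:=N_k^2$ for $k\ge 0$, so that $N_k=N^{2^k}$ and the hypothesis $N_{k+1}\ge N_k^2>1$ of \cref{lem:multiscale-comparison-ex-f_N} is automatically satisfied at every stage. Applying the lemma to each consecutive pair gives
\[
\left|\frac{\ex(\log|f_{N_{k+1}}^E|)}{N_{k+1}}-\frac{\ex(\log|f_{N_k}^E|)}{N_k}\right|\le C_0\frac{W\log(N_{k+1}W)}{N_k}.
\]

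Summing over $k=0,\ldots,K-1$ and applying the triangle inequality, I would obtain
\[
\left|\frac{\ex(\log|f_{N_K}^E|)}{N_K}-\frac{\ex(\log|f_N^E|)}{N}\right|\le C_0\sum_{k=0}^{K-1}\frac{W(2^{k+1}\log N+\log W)}{N^{2^k}}.
\]
Since $N^{2^k}$ grows doubly exponentially in $k$ while the numerator grows only geometrically, the series converges and is dominated by its $k=0$ term: explicitly the first term equals $W\log(N^2W)/N\le 2W\log(NW)/N$, while for $N\ge 2$ the tail $\sum_{k\ge 1}2^{k+1}/N^{2^k}$ is bounded by a constant multiple of $1/N$. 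This produces the stated bound $C_0 W\log(NW)/N$ uniformly in $K$.

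Finally, I would let $K\to\infty$ and invoke \cref{eq:determinant-to-Lyapunov} to identify $\lim_{K\to\infty}\ex(\log|f_{N_K}^E|)/N_K$ with $\gamma_1^E+\ldots+\gamma_W^E$; the passage to the limit is immediate because the bound is uniform in $K$. There is no substantive obstacle here: the multiscale comparison \cref{lem:multiscale-comparison-ex-f_N} encapsulates the main work, and the only point requiring care is verifying that the doubly exponential spacing of the scales forces the telescoping sum to be controlled by its leading term, which is what provides the sharp rate $W\log(NW)/N$.
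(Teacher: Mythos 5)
Your proof is correct and follows essentially the same route as the paper: the doubly exponential scales $N_k=N^{2^k}$, telescoping via \cref{lem:multiscale-comparison-ex-f_N}, a uniform bound on the resulting geometric-over-doubly-exponential sum, and passage to the limit through \cref{eq:determinant-to-Lyapunov}. The only cosmetic difference is that the paper splits the sum into its $\log N$ and $\log W$ contributions before bounding each by an absolute constant, whereas you bound the whole series by its $k=0$ term plus a tail; both yield the same estimate.
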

\begin{proof}
	Let $ N_k=N^{2^k} $. By applying \cref{lem:multiscale-comparison-ex-f_N} $ k $ times we get
	\begin{equation}\label{eq:multiscale-N_k}
		\left|\frac{\ex(\log|f_{N_k}^E|)}{N_k}-\frac{\ex(\log|f_N^E|)}{N}\right|
		\le \sum_{j=0}^{k-1} \frac{CW\log(N_{j+1}W)}{N_j}.
	\end{equation}	
	Since
	\begin{multline*}
		\sum_{j=0}^{\infty} \frac{CW\log(N_{j+1}W)}{N_j}
		=CW\left(\frac{\log N}{N}\sum_{j=0}^{\infty}\frac{2^{j+1}}{N^{2^j-1}}
			+ \frac{\log W}{N}\sum_{j=0}^{\infty}\frac{1}{N^{2^j-1}}\right)\\
		\lesssim \frac{CW\log(NW)}{N},
	\end{multline*}
	the conclusion follows from \cref{eq:determinant-to-Lyapunov} by letting $ k\to \infty $ in 
	\eqref{eq:multiscale-N_k}.
\end{proof}

\begin{lemma}\label{lem:lower-bound-determinant}
	Assume that $ \var(\log|f_N^E|)\ge \beta_0 NW $. Let $ 0<\epsilon\ll 1 $. Then there exists  
	a constant $ C_0=C_0(D_0,D_1,|E|,\beta_0,\epsilon) $ such that we have
	\begin{equation*}
		\ex(\log|f_{N}^E| )\ge \frac{\beta_0}{8}(NW)^{1/2-\epsilon},
	\end{equation*}
	for any 
	$ N\ge C_0 W^{1+5\epsilon}$.
\end{lemma}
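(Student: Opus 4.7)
The plan is to carry out the elementary truncation strategy outlined in the introduction, applied to $X := \log|f_N^E|$. Fix a small auxiliary exponent (say $\epsilon' = \epsilon/2$) and set the truncation level $M := (NW)^{1/2+\epsilon'}$. Split
\[
  \ex X = \ex\bigl(X\ind_{\{0\le X\le M\}}\bigr) + \ex\bigl(X\ind_{\{X>M\}}\bigr) + \ex\bigl(X\ind_{\{X<0\}}\bigr),
\]
discard the non-negative middle summand, and exploit the inequality $X\ge X^2/M$ on $[0,M]$ to estimate
\[
  \ex\bigl(X\ind_{\{0\le X\le M\}}\bigr)\ge M^{-1}\ex\bigl(X^2\ind_{\{0\le X\le M\}}\bigr)
  =M^{-1}\Bigl(\ex X^2-\ex\bigl(X^2\ind_{\{X>M\}}\bigr)-\ex\bigl(X^2\ind_{\{X<0\}}\bigr)\Bigr).
\]
Since $\ex X^2\ge \var X\ge \beta_0 NW$, the goal reduces to showing that the two ``error'' expectations are negligible compared with $\beta_0 NW$, and that the negative-tail contribution to $\ex X$ itself is negligible compared with $\frac{\beta_0}{2}(NW)^{1/2-\epsilon'}$.

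The negative tail is controlled by \cref{thm:Cartan-W-negative-values}: writing $\ex(X^2\ind_{\{X<0\}})=\int_0^\infty 2t\,\pr(X<-t)\,dt$ and splitting at $t_0\sim W\log(NW)$, the bound $\pr(X<-10KW)\le\exp(-K/4)$ gives $\ex(X^2\ind_{\{X<0\}})\lesssim W^2\log^2(NW)$ and, by the same argument, $\ex(|X|\ind_{\{X<0\}})\lesssim W\log(NW)$. For the positive tail I argue by dichotomy to sidestep the need for an a priori two-sided control of $\ex X$: if $\ex X\ge M/2$ then already $\ex X\ge (NW)^{1/2+\epsilon'}/2\gg \frac{\beta_0}{8}(NW)^{1/2-\epsilon}$ and we are done; otherwise $\{X>M\}\subseteq\{|X-\ex X|>M/2\}$, so \cref{thm:ldt} applied with a still smaller auxiliary exponent (e.g.\ $\epsilon/4$) makes $\pr(X>M)$ super-polynomially small in $NW$. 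Combining this with the crude a priori moment bound $\ex X^4\lesssim (NW)^C$ (from \cref{lem:Cartan-estimates}(a) and \cref{lem:abstract-moment-estimate}) via Cauchy--Schwarz yields $\ex(X^2\ind_{\{X>M\}})\le 1$ for $NW$ large enough.

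Plugging these estimates in gives
\[
  \ex X\ge \frac{\ex(X^2\ind_{\{0\le X\le M\}})}{M}-CW\log(NW)
  \ge \frac{\beta_0 NW-CW^2\log^2(NW)-1}{M}-CW\log(NW).
\]
The constraint $N\ge C_0 W^{1+5\epsilon}$ ensures $NW\ge C_0 W^{2+5\epsilon}$, which swallows the $W^2\log^2(NW)$ correction and leaves $\ex X\ge \frac{\beta_0}{2}(NW)^{1/2-\epsilon'}-CW\log(NW)$. The same condition $N\ge C_0 W^{1+5\epsilon}$ (with $C_0=C_0(\beta_0,\epsilon,D_0,D_1,|E|)$ absorbing the polylogarithmic factors) ensures that the first term dominates and exceeds $\frac{\beta_0}{8}(NW)^{1/2-\epsilon}$, giving the claim.

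The only genuine difficulty is the apparent circularity in the positive-tail estimate, which I resolve by the $\ex X\gtrless M/2$ dichotomy described above. Everything else is bookkeeping: choosing $\epsilon'$ slightly smaller than $\epsilon$ so that the leading term $\frac{\beta_0}{2}(NW)^{1/2-\epsilon'}$ beats the target $\frac{\beta_0}{8}(NW)^{1/2-\epsilon}$ by a factor $(NW)^{\epsilon-\epsilon'}$ that comfortably absorbs $CW\log(NW)$ under the hypothesis $N\ge C_0 W^{1+5\epsilon}$, and picking the LDT exponent small enough (like $\epsilon/4$) so that the tail probability $\pr(X>M)$ decays as $\exp(-(NW)^{c\epsilon})$.
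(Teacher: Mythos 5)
Your argument is correct and follows the same strategy as the paper's proof: truncate $\log|f_N^E|$ at $\pm(NW)^{1/2+\epsilon'}$, use $X\ge X^2/M$ on the bulk, control the negative tail with \cref{thm:Cartan-W-negative-values}, and control the positive tail with \cref{thm:ldt} plus Cauchy--Schwarz against a crude fourth-moment bound. Your dichotomy (if $\ex X\ge M/2$ we are trivially done, else apply the LDT) is just a different packaging of the paper's opening WLOG reduction to $\ex(\log|f_N^E|)\le (NW)^{1/2}$; the bookkeeping is the same.
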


\begin{proof}
	We can assume that $ \ex(\log|f_N^E| )\le (NW)^{1/2} $ because otherwise there is nothing to prove. Let 
	$ \Omega_-=\{ \log|f_{N}^E|<0  \} $, $ \Omega=\{ 
	\log|f_{N}^E|\in[0,2(NW)^{1/2+\epsilon}]  \} $, and
	$ \Omega_+=\{ \log|f_{N}^E| > 2(NW)^{1/2+\epsilon}\} $. Also let 
	\begin{equation*}
		X_-=\ind_{\Omega_-}\log|f_N^E|,\quad
		X=\ind_\Omega \log|f_N^E|,\quad
		X_+=\ind_{\Omega_+}\log|f_N^E|.	
	\end{equation*}
	We have 
	\begin{equation*}
		\ex(\log|f_{N}^E| )\ge \ex(X)+\ex(X_-)
		\ge \frac{1}{2}(NW)^{-1/2-\epsilon}\ex(X^2)+\ex(X_-).
	\end{equation*}
	The conclusion will follow after we provide lower bounds for $ \ex(X^2) $ and $ \ex(X_-) $.
	
	We have 
	\begin{equation*}
		\ex(X^2)\ge \var(\log|f_{N}^E|)-\ex(X_-^2)-\ex(X_+^2).
	\end{equation*}	
	From \cref{thm:Cartan-W-negative-values} and 
	\cref{lem:abstract-moment-estimate} it follows that 
	\begin{equation*}
		\ex(X_-^2)\le CW^2(\log(NW))^2\le\frac{\beta_0}{4} NW,
	\end{equation*}
	provided that $ N\ge C'(D_0,D_1,|E|,\beta_0,\epsilon)W^{1+\epsilon} $.
	From the assumption that 
	\begin{equation*}
		\ex(\log|f_{N}^E| )\le 
	(NW)^{1/2}
	\end{equation*}
 	and the large deviations estimate it follows that
	\begin{equation*}
		\pr(\Omega_+)\le\pr\left(|\log|f_N^E|-\ex(\log|f_N^E|)|>(NW)^{1/2+\epsilon}\right)
		\le\exp(-(NW)^{\epsilon/2}/2)
	\end{equation*}
	for $ N\ge C(D_0,D_1,|E|,\epsilon) $. It now follows from \cref{lem:Cartan-estimates} and 
	\cref{lem:abstract-moment-estimate} that
	\begin{multline*}
		\ex(X_+^2)\le \sqrt{\pr(\Omega_+)} \sqrt{\ex((\log|f_{N}^E| )^4)}
		\le C\exp(-(NW)^{\epsilon/2}/4)(NW)^2\log^2(NW)\\
		\le \frac{\beta_0}{4}NW,
	\end{multline*}
	provided $ N\ge C'(D_0,D_1,|E|,\beta_0,\epsilon)$. We now have that 
	\begin{equation*}
		\ex(X^2)\ge\frac{\beta_0}{4}NW,	
	\end{equation*}
	for $ N\ge C(D_0,D_1,|E|,\beta_0,\epsilon)W^{1+\epsilon}  $.
	
	Another application of \cref{thm:Cartan-W-negative-values} and \cref{lem:abstract-moment-estimate} yields
	\begin{equation*}
		\ex(X_-)\ge -CW\log(NW)\ge   
		-\frac{\beta_0}{8}(NW)^{1/2-\epsilon}
	\end{equation*}
	provided $ N\ge C'(D_0,D_1,|E|,\beta_0,\epsilon) W^{1+5\epsilon} $. We used the fact that
	\begin{equation}\label{eq:W-epsilon}
		(W^{1+5\epsilon}W)^{1/2-\epsilon}\ge W^{1+\epsilon/4},
	\end{equation}
	for $ \epsilon\ll 1 $.
		
	We can now conclude that
	\begin{multline*}
		\ex(\log|f_{N}^E| )\ge \ex(X)+\ex(X_-)
		\ge \frac{1}{2}(NW)^{-1/2-\epsilon}\frac{\beta_0}{4}NW-\frac{\beta_0}{8}(NW)^{1/2-\epsilon}\\
		= \frac{\beta_0}{8} (NW)^{1/2-\epsilon},
	\end{multline*}
	provided $ N\ge C(D_0,D_1,|E|,\beta_0,\epsilon) W^{1+5\epsilon} $.
\end{proof}
We are finally able to prove \cref{thm:lower-bound-sum-Lyapunov}.
\begin{proof}(of \cref{thm:lower-bound-sum-Lyapunov})
	From \cref{prop:rate-of_convergence} and \cref{lem:lower-bound-determinant}, with 
	$ N=C W^{1+5\delta} $, $ \delta\ll 1 $, and $ C=C(D_0,D_1,|E|,\beta_0,\delta) $ large enough, we obtain
	\begin{multline*}
		\gamma_1^E+\ldots+\gamma_W^E
		\ge \frac{\beta_0}{8}\frac{(NW)^{1/2-\delta}}{N}
			-C\frac{W\log(NW)}{N}
		\ge \frac{\beta_0}{16}\frac{(NW)^{1/2-\delta}}{N}\\
		\simeq c \frac{(W^{1+5\delta}W)^{1/2-\delta}}{W^{1+5\delta}}
		\ge c W^{-5\delta}.
	\end{multline*}
	For the last inequality we used \cref{eq:W-epsilon}.
	The conclusion follows immediately.
\end{proof}


\bibliographystyle{alpha}
\bibliography{../Schroedinger}

\vspace{2cm}

\begin{flushleft}
\textbf{I. Binder}: Dept. of Mathematics, University of Toronto, Toronto,
ON, M5S 2E4, Canada; \texttt{ilia@math.utoronto.ca}
\par\end{flushleft}

\medskip{}

\begin{flushleft}
\textbf{M. Goldstein}: Dept. of Mathematics, University of Toronto, Toronto,
ON, M5S 2E4, Canada; \texttt{gold@math.utoronto.ca}
\par\end{flushleft}

\medskip{}

\begin{flushleft}
\textbf{M. Voda}: Dept. of Mathematics, University of Toronto, Toronto,
ON, M5S 2E4, Canada; \texttt{mvoda@math.utoronto.ca}
\par\end{flushleft}

\end{document}